\DeclarePairedDelimiter\ceil{\lceil}{\rceil}
\DeclarePairedDelimiter\floor{\lfloor}{\rfloor}
\def\NAT@def@citea{\def\@citea{\NAT@separator}}% Suppress spaces between citations using natbib.sty
\theoremstyle{plain}% Theorem-like structures provided by amsthm.sty
\newtheorem{theorem}{Theorem}[section]
\theoremstyle{definition}
\newtheorem{definition}[theorem]{Definition}
\theoremstyle{remark}
\newtheorem{remark}{Remark}
\begin{document}

\articletype{ARTICLE TEMPLATE}% Specify the article type or omit as appropriate

\title{Four-splitting based coarse-grained multicomputer parallel algorithm for the optimal binary search tree problem}

\author{
\name{Jerry Lacmou Zeutouo\textsuperscript{a}, Vianney Kengne Tchendji\textsuperscript{a}\thanks{CONTACT Vianney Kengne Tchendji. Email: vianneykengne@yahoo.fr}, and Jean Fr\'ed\'eric Myoupo\textsuperscript{b}}
\affil{\textsuperscript{a}Department of Mathematics and Computer Science, University of Dschang, Dschang, Cameroon; \textsuperscript{b}Computer Science Lab-MIS, University of Picardie Jules Verne, Amiens, France}
}

\maketitle

\begin{abstract}
This paper presents a parallel solution based on the coarse-grained multicomputer (CGM) model using the four-splitting technique to solve the optimal binary search tree problem. The well-known sequential algorithm of Knuth solves this problem in $\mathcal{O}\left(n^2\right)$ time and space, where $n$ is the number of keys used to build the optimal binary search tree. To parallelize this algorithm on the CGM model, the irregular partitioning technique, consisting in subdividing the dependency graph into subgraphs (or blocks) of variable size, has been proposed to tackle the trade-off of minimizing the number of communication rounds and balancing the load of processors. This technique however induces a high latency time of processors (which accounts for most of the global communication time) because varying the blocks' sizes does not enable them to start evaluating some blocks as soon as the data they need are available. The four-splitting technique proposed in this paper solves this shortcoming by evaluating a block as a sequence of computation and communication steps of four subblocks. This CGM-based parallel solution requires $\mathcal{O}\left(n^2/\sqrt{p} \right)$ execution time with $\mathcal{O}\left( k \sqrt{p}\right)$ communication rounds, where $p$ is the number of processors and $k$ is the number of times the size of blocks is subdivided. An experimental study conducted to evaluate the performance of this CGM-based parallel solution showed that compared to the solution based on the irregular partitioning technique where the speedup factor is up to $\times$10.39 on one hundred and twenty-eight processors with 40960 keys when $k = 2$, the speedup factor of this solution is up to $\times$13.12 and rises up to $\times$14.93 when $k = 5$.
\end{abstract}

\begin{keywords}
Coarse-grained multicomputer; Optimal binary search tree; Dynamic programming; Dynamic graph model; Irregular partitioning; Four-splitting
\end{keywords}

\section{Introduction}
The optimal binary search tree problem consists in determining the binary search tree that will have the lowest overall cost of searching from a set of sorted keys knowing their access probabilities. It is a non-serial polyadic dynamic-programming problem which is characterized by a strong dependency between subproblems \cite{Wah1985}. This problem is widely studied since applications of optimal search binary trees are numerous. One of the most obvious is the search for a word in a dictionary. Indeed, from the words of a dictionary and the access frequency of each word, an optimal binary search tree can be built to quickly and efficiently answer a query. An application derived from the latter is the translation of a word from one language to another \cite{Cormen2009}. El-Qawasmeh \cite{El-Qawasmeh2004} used an optimal binary search tree to solve the word prediction problem. This problem attempts to guess and update the next word in a sentence as it is typed. In view of these different applications, an optimal binary search tree should be constructed from Knuth's sequential algorithm running in $\mathcal{O}\left(n^2\right)$ time and space \cite{Knuth1971}. It is an improved version of Godbole's sequential algorithm requiring $\mathcal{O}\left(n^3\right)$ time and $\mathcal{O}\left(n^2\right)$ space \cite{Godbole1973}.

\subsection{Related work}

In the literature, researchers proposed several parallel solutions of Knuth's sequential algorithm on different parallel computing models. On the PRAM model, Karpinski et al. \cite{Karpinski1994} designed a sublinear time parallel algorithm to construct optimal binary search trees. It requires $\mathcal{O}\left(n^{1-\epsilon}\log n\right)$ computation time with the total work $\mathcal{O}\left(n^{2+2\epsilon}\right)$ for an arbitrarily small constant $0 < \epsilon \leq 0.5$. Recall that the total work is equal to the computation time multiplied by the number of processors \citep{Karpinski1994}. On CREW-PRAM machines, Karpinski et al. \cite{Karpinski1996} presented an algorithm running in $\mathcal{O}\left(n^{0.6}\right)$ computation time with $n$ processors. On realistic models of parallel machines, Craus \cite{Craus2002} proposed an $\mathcal{O}\left(n\right)$-time parallel algorithm using $\mathcal{O}\left(n^2\right)$ processors on very-large-scale integration architectures. Wani and Admad \cite{Wani2019} proposed a parallel implementation of Knuth's sequential algorithm on GPU architectures. With 16384 keys, they achieved a speedup factor of 409 on a NVIDIA GeForce GTX 570 chip and a speedup factor of 745 on a NVIDIA GeForce GTX 1060 chip. On shared-memory architectures, after demonstrating that dependencies of subproblems available in the code implementing the Knuth sequential algorithm enable to generate only 2D tiled code, using the polyhedral model, Bielecki et al. \cite{Bielecki2021} proposed a way of transforming this algorithm to a modified one exposing dependencies to generate 3D parallel tiled code. Experimentations conducted using OpenMP demonstrated that the 3D tiled code considerably outperforms the 2D tiled code. On distributed-memory architectures, many researchers proposed their parallel solutions on the coarse-grained multicomputer (CGM) model.

The CGM model introduced in \citep{Dehne1993,Dehne2002} is the most suitable to design parallel algorithms that are not too dependent on a specific architecture than the systolic and hypercube models. It enables to formalize in just two parameters the performance of a parallel algorithm : the input data size $n$ and the number of processors $p$. A CGM-based parallel algorithm consists in successively repeating a computation round and a communication round until the problem is solved. Processors perform local computations on their data using the best sequential algorithm in each computation round and exchange data through the network in each communication round. All information sent from one processor to another is wrapped into a single long message to minimize the overall message overhead.

Kechid and Myoupo \cite{Kechid2008} proposed the first CGM-based parallel solution to solve the OBST problem in $\mathcal{O}\left(n^2/p\right)$ execution time with $\mathcal{O}\left(p\right)$ communication rounds. Myoupo and Kengne \cite{Myoupo2014} built a CGM-based parallel solution running in $\mathcal{O}\left(n^2/\sqrt{p}\right)$ execution time with $\mathcal{O}\left(\sqrt{p}\right)$ communication rounds. Later on, Kengne et al. \cite{Kengne2016} noticed that the execution time was related to the load balancing and the number of communication rounds. These criteria depend on the partitioning strategy and the distribution scheme strategy used when designing the CGM-based parallel solution:
\begin{enumerate}
	\item When the dependency graph is subdivided into small-size subgraphs (or blocks) \citep{Kechid2008}, the load difference between processors is small if one processor has one more block than another. However, the number of communication rounds will be high.
	\item When the dependency graph is subdivided into large-size blocks \citep{Myoupo2014}, the number of communication rounds of the corresponding algorithm is reduced since there are few blocks. However, the load of processors will be unbalanced.
\end{enumerate}
By generalizing the ideas of the dependency graph partitioning and distribution scheme introduced in \citep{Kechid2008,Myoupo2014}, Kengne et al. \cite{Kengne2016} proposed a CGM-based parallel solution that gives to the end-user the choice to optimize one criterion according to his own goal. The main drawback of this solution is the conflicting optimization criteria owing to the fact that the end-user cannot optimize more than one criterion. Moreover, these criteria have a significant impact on the latency time of processors, which in turn has an impact on the global communication time. Recall that the global communication time is obtained by adding up the latency time of processors and the effective transfer time of data. Indeed, when the number of communication rounds is high, excessive communication will lead to communication overhead which will deteriorate the global communication time. On the other hand, when the load of processors is unbalanced, the evaluation of a block will take a long time because of its large size. So a processor that is waiting for this block to start or continue its computation will wait longer to receive this block. As a result, whatever the chosen criterion, the global communication time will decrease the performance of CGM-based parallel solutions.

To tackle this trade-off, Kengne and Lacmou \cite{Kengne2019} proposed the irregular partitioning technique of the dynamic graph. It consists in subdividing this graph into blocks of variable size. It ensures that the blocks of the first steps (or diagonals) are of large sizes to minimize the number of communication rounds. Thereafter, it decreases these sizes along the diagonals to increase the number of blocks in these diagonals and enable processors to stay active longer. These blocks are fairly distributed over processors to minimize their idle time and balance the load between them. It requires $\mathcal{O}\left( n^2/\sqrt{p}\right)$ execution time with $\mathcal{O}\left(k\sqrt{p}\right)$ communication rounds, where $k$ is the number of times the size of blocks is subdivided. Experimental results showed that the irregular partitioning technique significantly reduced the global communication time compared to the regular partitioning technique proposed in \citep{Kechid2008,Kengne2016,Myoupo2014}.

Nevertheless, this technique also induces a high latency time of processors since it does not enable processors to start evaluating small-size blocks as soon as the data they need are available. Yet, these data are usually available before the end of the evaluation of large-size blocks.

\subsection{Our contribution}
To solve the minimum cost parenthesizing problem (MPP), which is part of the same class of non-serial polyadic dynamic-programming problems than the OBST problem, Lacmou and Kengne \cite{Lacmou2021} proposed the $k$-block splitting technique to reduce this latency time. This technique consists in splitting the large-size blocks into a set of smaller-size blocks called \textit{$k$-blocks}. Thus, to enable processors to start the evaluation of $k$-blocks as soon as possible, a single processor evaluates a block by computing and communicating each $k$-block contained in this block.

The purpose of splitting the blocks into a set of $k$-blocks is to progressively evaluate and communicate them during the evaluation of blocks. Indeed, evaluating a block belonging to the diagonal $d$ in a progressive fashion consists in starting at the diagonal $\ceil*{d/2}$. The $k$-block splitting technique is essentially based on the progressive evaluation of blocks. It is therefore adequate to solve the MPP because Godbole's sequential algorithm gives the possibility to evaluate the nodes of blocks in this way. In contrast, Knuth's sequential algorithm does not enable to perform this kind of evaluation to solve the OBST problem because it does not enable to know in advance when to start or continue the evaluation of a node. Moreover, since the $k$-block are numerous and small when $k$ increases, computing and communicating them to processors that need them will lead to communication overhead \citep{Lacmou2021}. Thus, it would not be meaningful and practical to apply the $k$-block splitting technique to solve the OBST problem because a lot of unnecessary computations could be performed and lead to poor performance.

In this paper, we introduce the four-splitting technique to reduce the latency time of processors caused by the irregular partitioning technique by splitting the large-size blocks into four small-size blocks (or subblocks). Hence, evaluating a block by a single processor will consist of computing and communicating each subblock contained in this block. The goal is the same than the $k$-block splitting technique, that is, to enable processors to start the evaluation of blocks as soon as possible. This CGM-based parallel solution requires $\mathcal{O}\left(n^2/\sqrt{p} \right)$ execution time with $\mathcal{O}\left( k \sqrt{p}\right)$ communication rounds. An experimental study conducted to evaluate the performance of this CGM-based parallel solution showed that compared to the solution based on the irregular partitioning technique where the speedup factor is up to $\times$10.39 on one hundred and twenty-eight processors with 40960 keys when $k = 2$, the speedup factor of this new solution is up to $\times$13.12 and increases up to $\times$14.93 when $k = 5$. We also conducted experiments to determine the ideal number of times the size of the blocks is subdivided. The results showed that this number depends on the architecture where the solution is executed.

This paper is organized as follows: Section 2 describes the OBST problem and presents Knuth's sequential algorithm. Section 3 illustrates the dynamic graph model of the OBST problem introduced in \citep{Kengne2019}. Then, Section 4 presents our CGM-based parallel solution using the four-splitting technique. Section 5 outlines the experimental results obtained, and finally Section 6 concludes this work.

\section{Optimal binary search tree problem}

A binary search tree is a binary tree that maintains a set of sorted keys according to the following rule: for each vertex $x$ of the tree, the key of $x$ is larger than all the keys contained in the left subtree of $x$ and it is smaller than all the keys contained in the right subtree of $x$ \citep{Nagaraj1997}. Figure		 \ref{fig:example_binary_search_tree} shows an example of a binary search tree corresponding to the set of keys $\langle $c, e, f, g, h, k, l, n, o, r, s$\rangle$ sorted in alphabetical order. A binary search tree is typically used to efficiently search for a particular key among a set of sorted keys. The cost of searching for this key is equal to length of the path from the root to this key plus one, which is in fact, the depth of this key \citep{Cormen2009}. For example, in Figure \ref{fig:example_binary_search_tree}, the cost of searching for the key "n" (the root vertex) is equal to 1 and the cost of searching for the key "h" is equal to 5.

\begin{figure}[!t]
	\centering
	\includegraphics[width=\columnwidth]{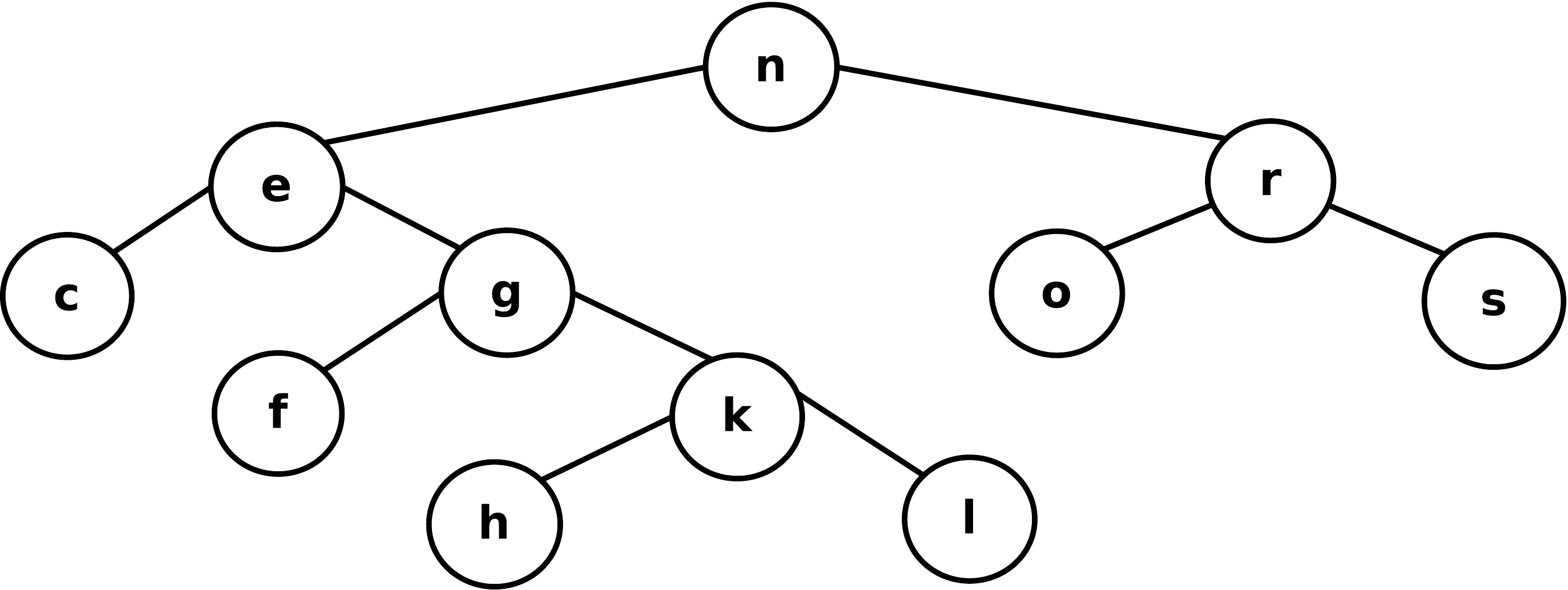}
	\caption{Example of a binary search tree corresponding to the set of letters $\langle$c, e, f, g, h, k, l, n, o, r, s$\rangle$ sorted in alphabetical order}\label{fig:example_binary_search_tree}
\end{figure}

The overall cost of searching, that is equal to the sum of the cost of searching of each key, must be usually as less as possible in many applications. Self-balancing binary search trees, which are binary search trees that automatically attempt to keep its height (maximal number of levels below the root vertex) as small as possible at all times \citep{Knuth1998}, minimize this cost by ensuring that all keys are as near the root vertex as possible. However, when a query frequency is associated with each key, this tree will not be the most efficient in some cases. Indeed, it would be better to place the most frequently queried keys closer to the root vertex, and to place the less frequently queried keys further from the root vertex. For example, consider the set of sorted keys $\langle$a, b, c$\rangle$ with respective frequencies of 3, 1, and 7. The cost of searching for a key $x$ is equal to the depth of $x$ multiplied by the frequency of $x$. Of the five possible binary search trees, depicted in Figure \ref{fig:five_binary_search_trees}, that can be obtained from these keys, the tree that minimizes the overall cost of searching is determined as follows:

\begin{enumerate}
	\item for the first tree, it is equal to $(1\times3) + (2\times7) + (3\times1) = 20$;
	\item for the second tree, it is equal to $(1\times3) + (2\times1) + (3\times7) = 26$;
	\item for the third tree, it is equal to $(1\times1) + (2\times3) + (2\times7) = 21$;
	\item for the fourth tree, it is equal to $(1\times7) + (2\times1) + (3\times3) = 18$;
	\item for the fifth tree, it is equal to $(1\times7) + (2\times3) + (3\times1) = 16$.
\end{enumerate}

\begin{figure}[!t]
	\centering
	\includegraphics[width=\columnwidth]{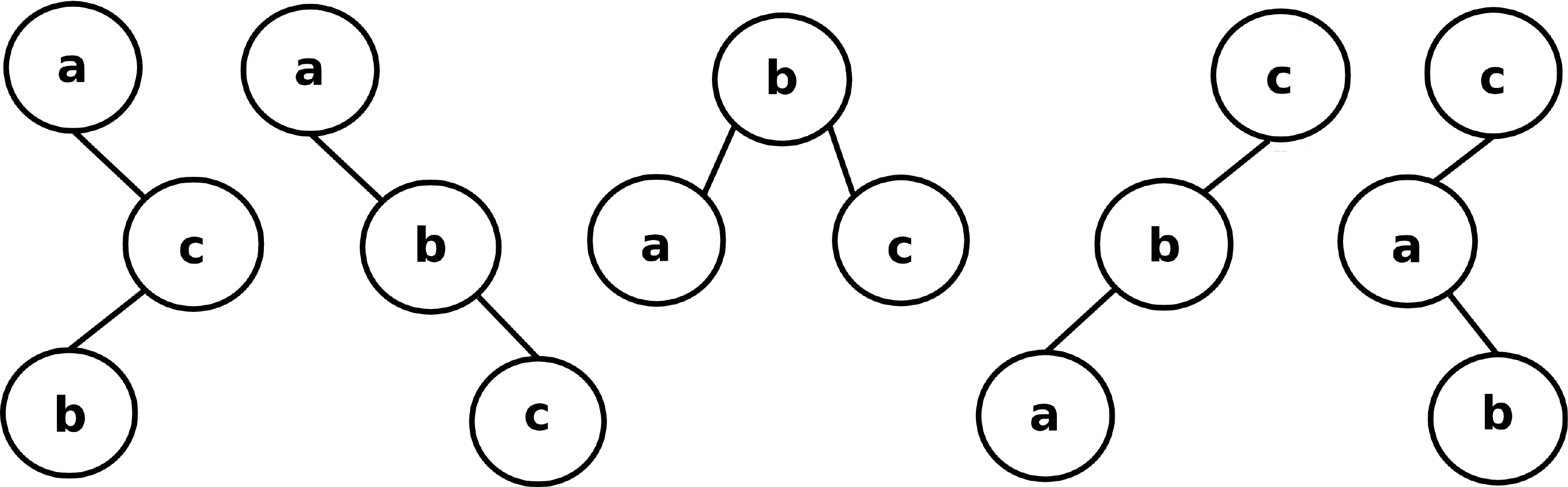}
	\caption{Five possible binary search
		trees obtained from the set of sorted keys $\langle$a, b, c$\rangle$}\label{fig:five_binary_search_trees}
\end{figure}

As shown in Figure \ref{fig:five_binary_search_trees}, the third binary search tree is balanced but the fifth is not. Nevertheless, the overall search cost of the latter is lower than the self-balanced binary search tree (the third tree in Figure \ref{fig:five_binary_search_trees}) and is the lowest of all; thus, it is the optimal binary search tree for the above example.

More formally, the optimal binary search tree (OBST) problem can be defined as follows: consider a set of $n$ sorted keys $K = \langle k_1, k_2, \ldots, k_n\rangle$, such that $k_1 < k_2 < \cdots < k_n$. The probability of finding a key $k_i$ is denoted by $p_i$. In the case where the searched element is not in $K$, consider a set of $n+1$ dummy keys $D = \langle d_0, d_1, \ldots, d_n\rangle$. Indeed, $d_0$ represents the set of values that are smaller than $k_1$ and $d_n$ the set of values that are larger than $k_n$. A dummy key $d_i$, such that $1 \leq i < n$, represents the set of values between $k_i$ and $k_{i+1}$. The probability of finding a dummy key $d_i$ is denoted by $q_i$. In summary, either the search ends in success (i.e. finding the searched key $k_i$) or in failure (i.e. finding a dummy key $d_i$); thus the sum of probabilities of success and failure is equal to 1. A binary search tree $T$ constructed from these different keys is composed of $n$ internal vertices in the set $K$  and $n+1$ leaves in the set $D$. The overall cost of searching of $T$ is given by Equation (\ref{eq:overall_cost_searching_tree}) :
\begin{equation}\label{eq:overall_cost_searching_tree}
	Cost(T) = \sum_{i = 1}^{n} \left(depth(k_i) \times p_i\right) + \sum_{i = 0}^{n} \left(depth(d_i) \times q_i\right)
\end{equation}
where $depth(x)$ is the depth of the key $x$. The OBST problem consists in finding the binary search tree that will have the lowest overall cost of searching.

Let $w(i,j)= p_{i+1}+ \cdots + p_j + q_{i} + q_{i+1}+\cdots+q_j$. The minimum cost of searching of a tree $T_{i,j}$ from the set of keys $K_{i,j} = \langle k_{i+1}, k_{i+2}, \ldots, k_j\rangle$ and the set of dummy keys $D_{i,j} = \langle d_{i}, d_{i+1}, \ldots, d_j\rangle$ is denoted by $Tree[i,j]$ and defined by:
\begin{equation}\label{eq:obst}
Tree[i,j]=\begin{cases}
		q_i                                                                                    & \text{if $ 0 \leq i = j \leq n $,} \\
		\min\limits_{i \leq k < j} \left\lbrace  Tree[i,k] + Tree[k+1,j] + w(i,j)\right\rbrace & \text{if $0 \leq i < j \leq n$.}
	\end{cases}
\end{equation}
Finding the optimal binary search tree from the set of keys $K$ and the set of dummy keys $D$ is reduced to compute $Tree[0,n]$. The straightforward sequential algorithm of Godboble \cite{Godbole1973}, running in $\mathcal{O}\left(n^3\right)$ time and $\mathcal{O}\left(n^2\right)$ space, is given by Algorithm \ref{alg:obst_sequential_algorithm}. The dynamic-programming (DP) table (named $Tree$ in Algorithm \ref{alg:obst_sequential_algorithm}) stores the value of the optimal cost of searching of $T_{i,j}$ (see line \ref{alg:obst_sequential_algorithm:save_in_dp_table:state} in Algorithm \ref{alg:obst_sequential_algorithm}). The tracking table (named $Cut$ in Algorithm \ref{alg:obst_sequential_algorithm}) stores the value of the index $k$ which minimizes $Tree[i,j]$ (see line \ref{alg:obst_sequential_algorithm:save_in_track_table:state} in Algorithm \ref{alg:obst_sequential_algorithm}). It is the optimal decomposition value of $T_{i,j}$ into two subtrees. For a problem of size $n=3$, the dependency graph partitioning and the DP table are illustrated in Figures \ref{fig:obst_problem_n3_dp_dependency} and \ref{fig:obst_problem_n3_dp_table}, respectively.

\alglanguage{pseudocode}
\begin{algorithm}[!t]
	\caption{Godbole's sequential algorithm} \label{alg:obst_sequential_algorithm}
	\begin{algorithmic}[1]
		\For{$i = 0$ to $n$} \label{alg:obst_sequential_algorithm:init:begin}
		\State $Tree[i,i] \gets q_i$; \label{alg:obst_sequential_algorithm:init:end}
		\EndFor
		\For{$d = 1$ to $n$}
		\For{$i = 0$ to $n-d+1$}
		\State $j \gets n-d+1$;
		\State $Tree[i,j] \gets \infty$;
		\For{$k = i$ to $j-1$} \label{alg:obst_sequential_algorithm:loop:3:start}
		\State $c \gets Tree[i,k] + Tree[k+1,j] +  w(i,j)$; \label{alg:obst_sequential_algorithm:core_computation:state}
		\If{$c < Tree[i,j]$}
		\State $Tree[i,j] \gets c$; \label{alg:obst_sequential_algorithm:save_in_dp_table:state}
		\State $Cut[i,j] \gets k$; \label{alg:obst_sequential_algorithm:save_in_track_table:state}
		\EndIf
		\EndFor
		\EndFor
		\EndFor
	\end{algorithmic}
\end{algorithm}

\begin{figure}[!t]
	\centering
	\leavevmode
	\subfloat[][Dependency graph]{%
		\label{fig:obst_problem_n3_dp_dependency}
		\includegraphics[width=.53\columnwidth]{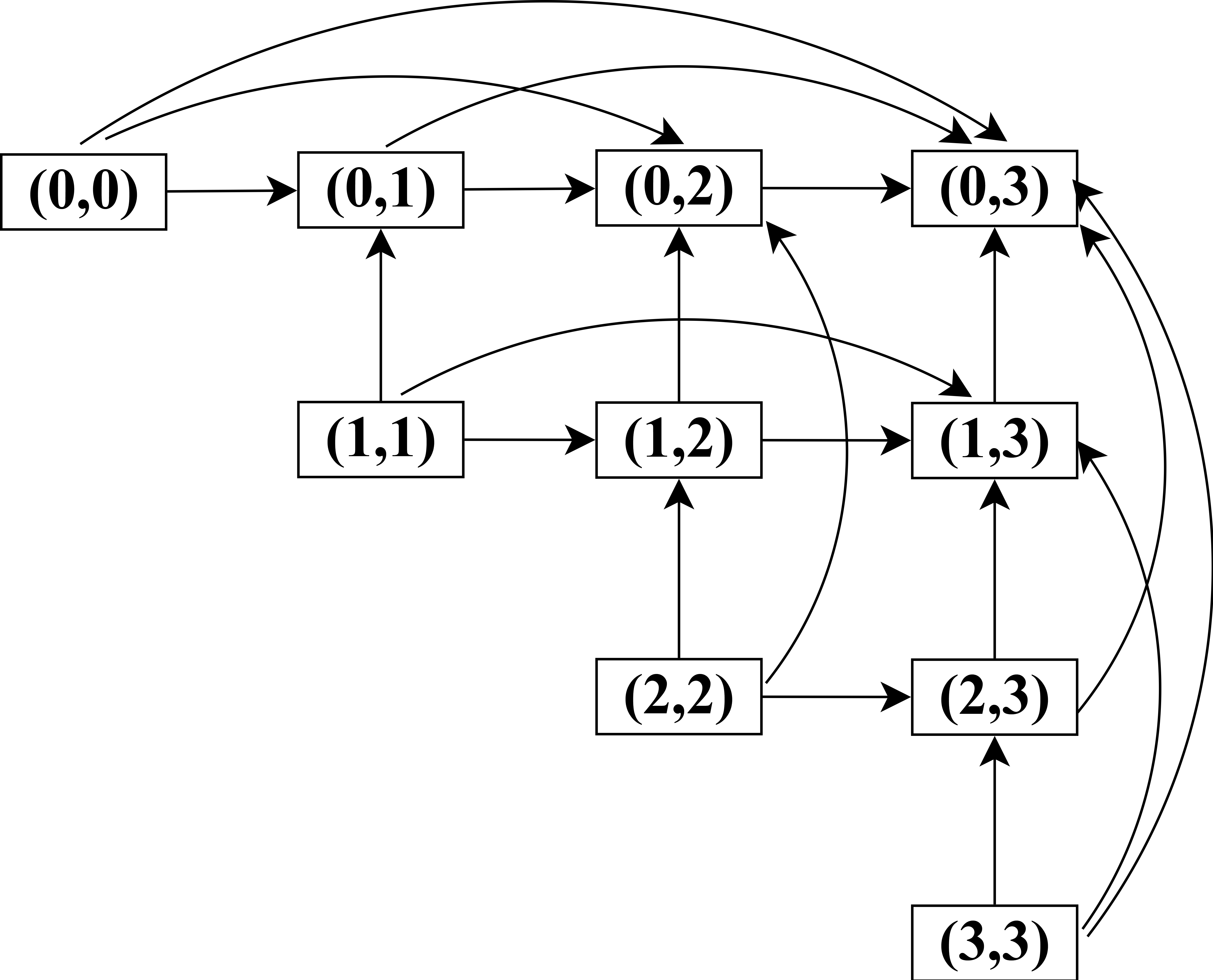}
	}
	\subfloat[][DP table]{%
		\label{fig:obst_problem_n3_dp_table}
		\includegraphics[width=.43\columnwidth]{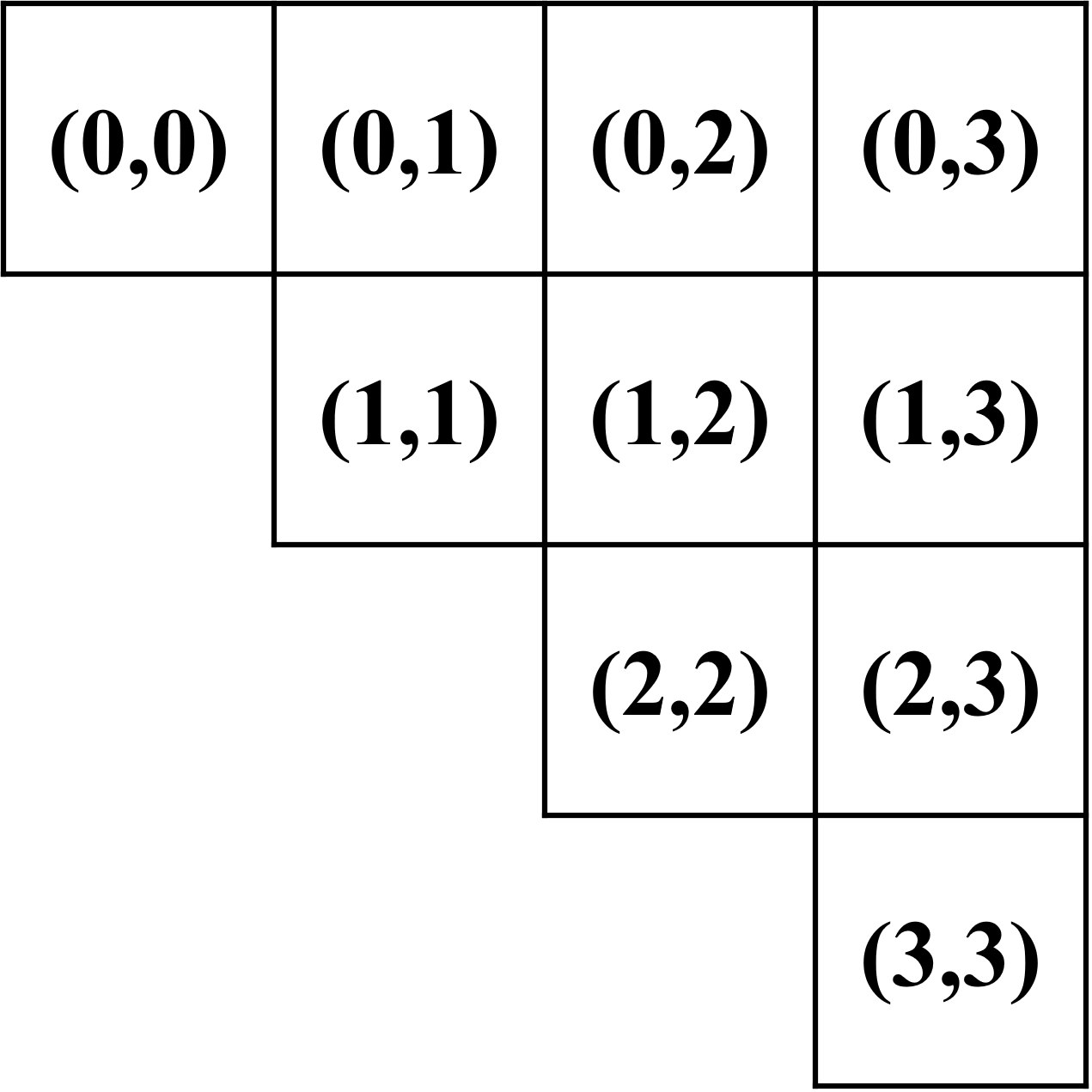}}
	\caption{Dependency graph and dynamic-programming table used to compute $Tree[0, 3]$}
\end{figure}

Knuth \citep{Knuth1971} noticed entries of tracking table $Cut$ satisfy the \textit{monotonicity property} such that $Cut[i,j-1] \leq Cut[i,j] \leq Cut[i+1,j]$ for all $0 \leq i < j \leq n$. This property means that computing $Cut[i,j]$ consists in finding all indices between $Cut[i,j-1]$ and $Cut[i+1,j]$, instead of between $i$ and $j$ as it is done in line \ref{alg:obst_sequential_algorithm:loop:3:start} of Algorithm \ref{alg:obst_sequential_algorithm}. He thus achieves to evaluate the $\Theta\left(n^2\right)$ subproblems in constant time. Algorithm \ref{alg:obst_sequential_algorithm_knuth} draws a big picture.

\alglanguage{pseudocode}
\begin{algorithm}[!t]
	\caption{Knuth's sequential algorithm} \label{alg:obst_sequential_algorithm_knuth}
	\begin{algorithmic}[1]
		\For{$i = 0$ to $n$} \label{alg:obst_sequential_algorithm_knuth:init:begin}
		\State $Tree[i,i] \gets q_i$; \label{alg:obst_sequential_algorithm_knuth:init:end}
		\EndFor
		\For{$d = 1$ to $n$}
		\For{$i = 0$ to $n-d+1$}
		\State $j \gets n-d+1$;
		\State $Tree[i,j] \gets \infty$;
		\For{$k = Cut[i,j-1]$ to $Cut[i+1,j]$} \label{alg:obst_sequential_algorithm_knuth:loop:3:start}
		\State $c \gets Tree[i,k] + Tree[k+1,j] +  w(i,j)$; \label{alg:obst_sequential_algorithm_knuth:core_computation:state}
		\If{$c < Tree[i,j]$}
		\State $Tree[i,j] \gets c$; \label{alg:obst_sequential_algorithm_knuth:save_in_dp_table:state}
		\State $Cut[i,j] \gets k$; \label{alg:obst_sequential_algorithm_knuth:save_in_track_table:state}
		\EndIf
		\EndFor
		\EndFor
		\EndFor
	\end{algorithmic}
\end{algorithm}

One downside to the speedup of Knuth, however, is that the number of comparison operations for entries in the same diagonal varies from one entry to another compared to the classical version \cite{Knuth1971}. Therefore, when designing parallel algorithms based on Knuth's sequential algorithm, there is no guarantee that the processors will have the same load if entries on a diagonal are fairly distributed among them. It is the main parallelization constraint of this algorithm.

\section{Dynamic graph model of the OBST problem}
Kengne and Lacmou \citep{Kengne2019} showed that the OBST problem can be solved through the shortest path algorithms on weighted dependency graph. Indeed, they proposed a graph model called \textit{dynamic graph} that is formulated from Equation (\ref{eq:obst}). For a problem of size $n$, this graph is denoted by $D_n$ and defined as follows :

\begin{definition}
	Given a set of $n$ sorted keys, a dynamic graph $D_n = (V,E \cup E')$ is defined as a set of vertices,
	\[
		\begin{split}
			V & = \{(i, j) : 0 \leq i \leq j \leq n\} \cup \{(-1,-1)\}\\
		\end{split}
	\]
	a set of unit edges,
	\[
		\begin{split}
			E & = \{(i, j) \rightarrow (i, j + 1) : 0 \leq i \leq j < n\} \cup \{(i, j) \uparrow (i - 1, j) : \\
			& \hspace{1.3em} 0 < i \leq j \leq n\} \cup \{(-1,-1) \nearrow (i, i) : 0 \leq i \leq n\}
		\end{split}
	\]
	a set of jumps,
	\[
		\begin{split}
			E' & = \{(i, j) \Rightarrow (i, t) : 0 \leq i < j < t \leq n\} \cup \{(s, t) \Uparrow (i, t) : \\
			& \hspace{1.3em} 0 \leq i < s < t \leq n\}
		\end{split}
	\]
	a weight function $W$ such that
	\[
		\begin{array}{l l}
			W((i, j) \rightarrow (i, j+1)) = q_{j+1} + w(i,j+1) & 0 \leq i \leq j < n        \\
			W((i, j) \uparrow (i - 1, j)) = q_{i-1} + w(i-1,j)  & 0 < i \leq j \leq n        \\
			W((-1,-1) \nearrow (i, i))   = q_i                  & 0 \leq i \leq n            \\
			W((i, k) \Rightarrow (i, j))  = SP[k+1,j] + w(i,j)  & 0 \leq i < k < j \leq n    \\
			W((k+1, j) \Uparrow (i, j))  = SP[i,k] + w(i,j)     & 0 \leq i \leq k < j \leq n
		\end{array}
	\]
\end{definition}

A square matrix of size $n$ called \textit{shortest path matrix}, and denoted by $SP$, is used to store in the cell $SP[i,j]$ the shortest path from node $(-1, -1)$ to $(i,j)$. They showed that the computation of $Tree[i,j]$ is equivalent to search in $D_n$ the shortest path from node $(-1, -1)$ to $(i,j)$. Therefore, it is straightforward to prove that Knuth's sequential algorithm is equivalent to compute the shortest paths from $(-1, -1)$ to the other vertices in a dynamic graph $D_n$, incrementally, diagonal after diagonal, from left to right. Thus, the shortest path corresponds to the optimal binary search tree. Figure \ref{fig:Dn_n3} shows a dynamic graph $D_3$ for a problem of size $n = 3$. It has the same form as the dependency graph between subproblems depicted in Figure \ref{fig:obst_problem_n3_dp_dependency}, with an additional node $(-1, -1)$.

\begin{figure}[!t]
	\centering
	\leavevmode
	\subfloat[][Dynamic graph $D_3$]{%
		\label{fig:Dn_n3}
		\includegraphics[width=.5025\columnwidth]{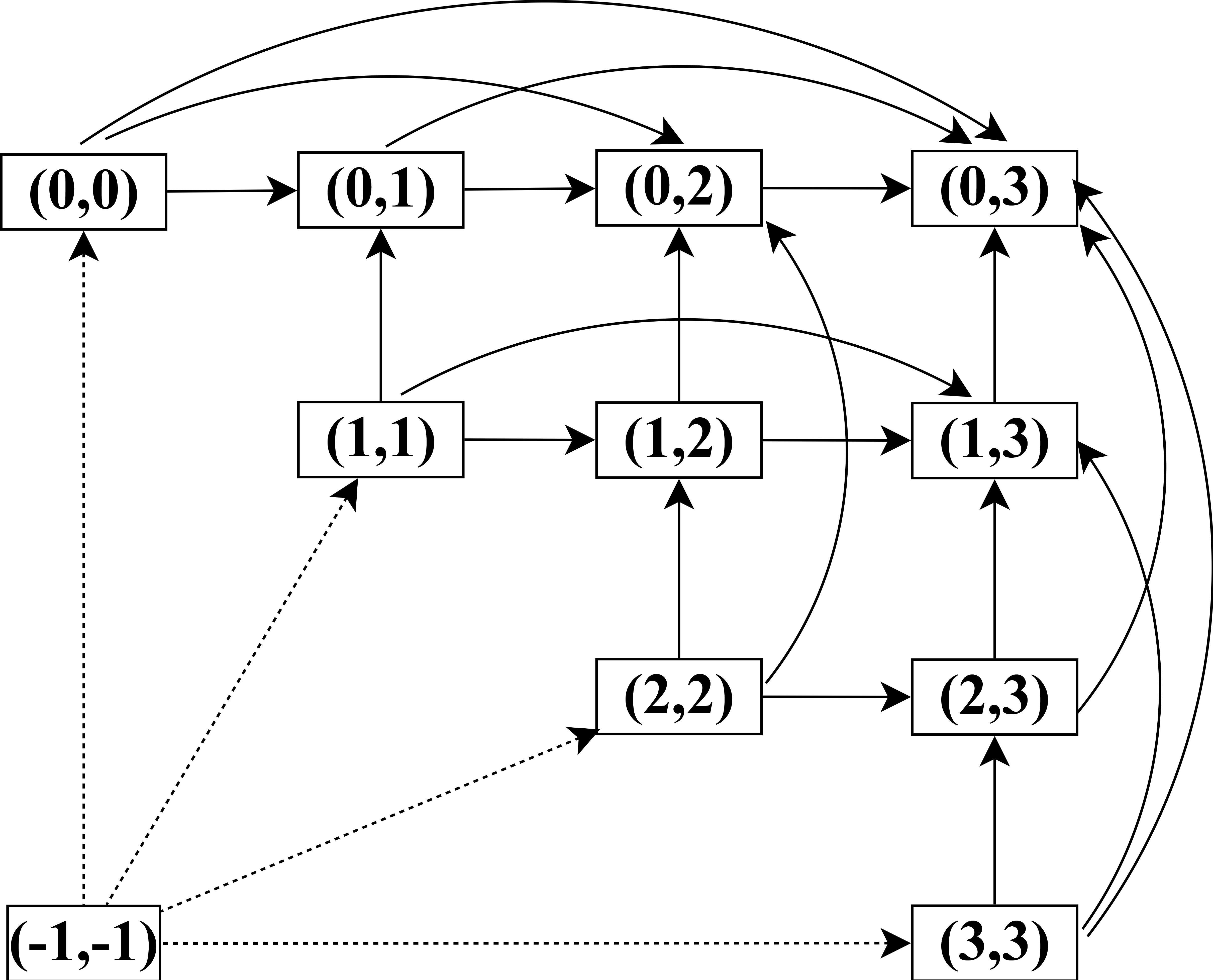}
	}
	\subfloat[][Dynamic graph ${D}_{3}^{\prime}$]{%
		\label{fig:Dn_n3_without_vert_and_hor_arc}
		\includegraphics[width=.4475\columnwidth]{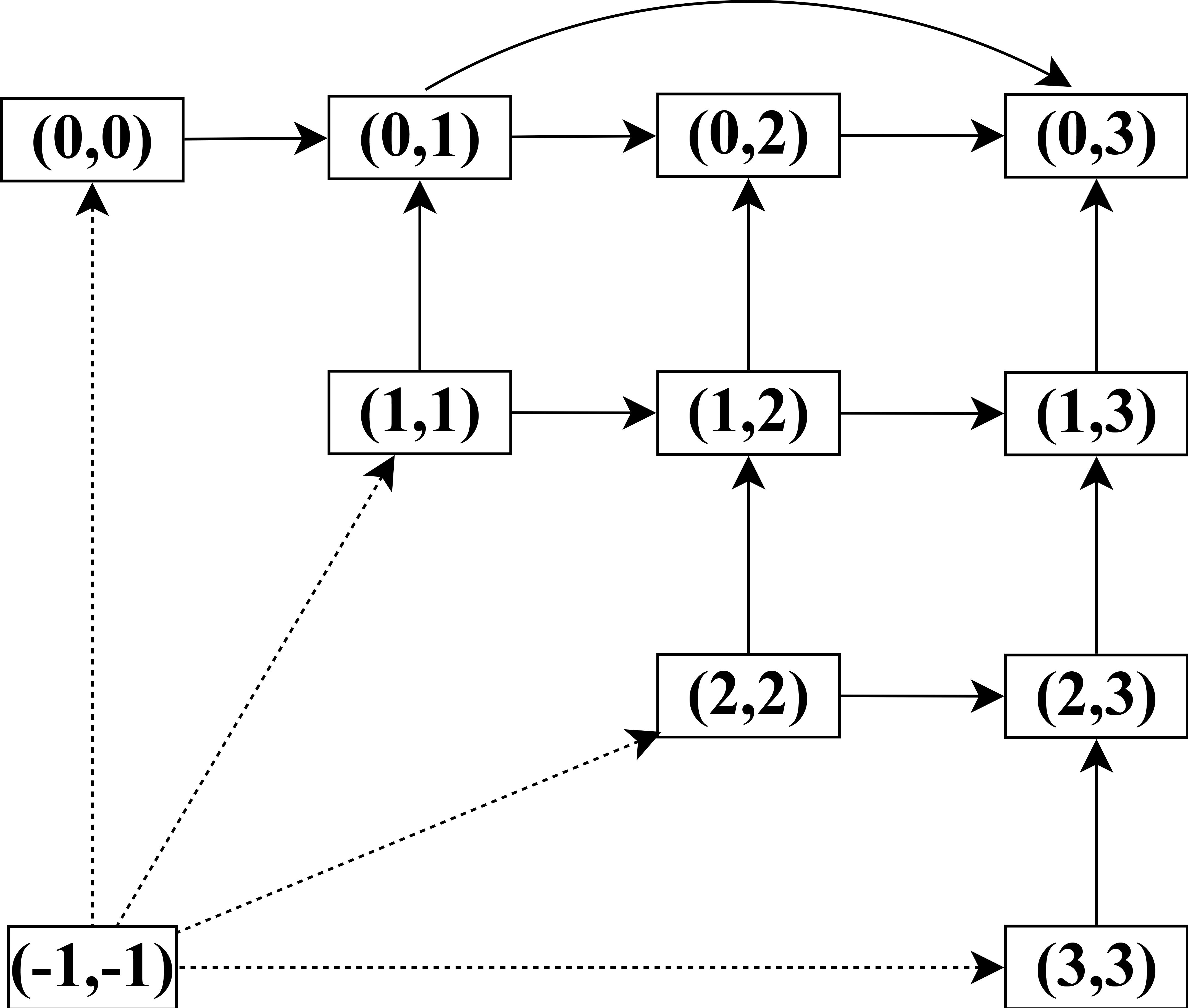}}
	\caption{Dynamic graphs $D_3$ and ${D}_{3}^{\prime}$ for $n=3$}
\end{figure}

Given a problem of size $n$ and its corresponding dynamic graph $D_n$ :

\begin{theorem}[Duality Theorem]\label{theo:duality}
	If a shortest path from $(-1,-1)$ to $(i, j)$ contains the jump $(i, k) \Rightarrow (i, j)$, then there is a dual shortest path containing the unit edge $(k+1, j) \uparrow (i, j)$ when $0 \leq i = k < j \leq n$ and the jump $(k+1, j) \Uparrow (i, j)$ when $0 \leq i < k < j \leq n$.
\end{theorem}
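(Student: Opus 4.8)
The plan is to unwind the definitions of the weight function $W$ and of the shortest path matrix $SP$, and to exhibit the dual path explicitly by "transposing" the last jump. Suppose a shortest path $P$ from $(-1,-1)$ to $(i,j)$ ends with the jump $(i,k)\Rightarrow(i,j)$. By the weight function, this last edge contributes $W((i,k)\Rightarrow(i,j)) = SP[k+1,j] + w(i,j)$, so the length of $P$ equals $SP[i,k] + SP[k+1,j] + w(i,j)$, where $SP[i,k]$ is the length of the sub-path of $P$ from $(-1,-1)$ to $(i,k)$ — which must itself be a shortest path to $(i,k)$, since otherwise $P$ could be shortened. The idea is now to build a dual path $P'$ that reaches $(i,j)$ via the "other" predecessor, namely from $(k+1,j)$: take a shortest path from $(-1,-1)$ to $(k+1,j)$ (of length $SP[k+1,j]$) and append the edge into $(i,j)$.

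The key step is the case split dictated by the statement. If $i = k$, then $(k+1,j) = (i+1,j)$ and the relevant edge is the \emph{unit} edge $(k+1,j)\uparrow(i,j)$, whose weight is $W((i+1,j)\uparrow(i,j)) = q_i + w(i,j)$; here one must also check that $SP[k+1,k+1] = SP[i+1,i+1] = q_{i+1}$ matches the degenerate term $SP[i,k] = SP[i,i] = q_i$ correctly, i.e. that the base values make the two path lengths coincide. If $i < k$, then $0 \le i < k < j \le n$ and the relevant edge is the \emph{jump} $(k+1,j)\Uparrow(i,j)$, whose weight is $W((k+1,j)\Uparrow(i,j)) = SP[i,k] + w(i,j)$ by definition. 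In either case the length of $P'$ is $SP[k+1,j] + SP[i,k] + w(i,j)$ — exactly the length of $P$. Hence $P'$ is also a shortest path from $(-1,-1)$ to $(i,j)$, and it contains the claimed unit edge (resp. jump).

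I would organize the write-up as: (1) recall that the length of $P$ decomposes as $SP[i,k] + W(\text{last jump})$ and that the prefix is optimal to $(i,k)$; (2) substitute the formula for $W((i,k)\Rightarrow(i,j))$ to get the total length $\ell := SP[i,k]+SP[k+1,j]+w(i,j)$; (3) construct $P'$ by concatenating a shortest path to $(k+1,j)$ with the appropriate incoming edge, handling the two cases $i=k$ and $i<k$ separately; (4) compute the length of $P'$ using the corresponding formula for $W$ and observe it equals $\ell$; (5) conclude optimality of $P'$ and note it contains the dual edge. The only mild subtlety — and the place I'd be most careful — is the boundary case $i=k$: one must verify that the unit-edge weight $q_i + w(i,j)$ together with $SP[i,i]=q_i$ reproduces the same contribution as the jump branch would, so that the duality is seamless across the two cases. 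Everything else is a direct substitution into the definition of $W$ and the optimality-of-prefixes argument, which is standard for shortest-path arguments on DAGs.
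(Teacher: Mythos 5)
Your argument is correct: decomposing the length of a shortest path ending in the horizontal jump as $SP[i,k]+SP[k+1,j]+w(i,j)$ via optimality of prefixes, and matching it against the path through $(k+1,j)$ using $W((k+1,j)\Uparrow(i,j))=SP[i,k]+w(i,j)$ (with the degenerate check $SP[i,i]=q_i$ for the unit-edge case $i=k$), is exactly the natural proof, and your case split mirrors the statement's. Note that the paper does not actually prove Theorem~\ref{theo:duality} --- it only states it, importing the result from \citep{Kengne2019} --- so there is nothing to contrast your write-up against; your plan is sound and complete as is.
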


Theorem \ref{theo:duality} is fundamental because it makes it possible to avoid redundant computations when looking for the value of the shortest path of $D_n$'s vertices. Indeed, for any vertex $(i, j)$, among all its shortest paths containing jumps, only those that contain only horizontal jumps are evaluated. The input graph of our CGM-based parallel solution is therefore a subgraph of $D_n$ denoted by ${D}_{n}^{\prime}$, in which the set of edges from $(i, k)$ to $(i, j)$, such that $0 \leq i = k < j \leq n$, and from $(k + 1, j)$ to $(i, j)$, such that $0 \leq i < k < j \leq n$, is removed. Figure \ref{fig:Dn_n3_without_vert_and_hor_arc} shows the dynamic graph ${D}_{3}^{\prime}$.

\section{Our CGM-based parallel solution using the four-splitting technique}

\subsection{Dynamic graph partitioning}
The irregular partitioning technique consists in subdividing the shortest path matrix into submatrices (blocks) of varying size (irregular size) to enable a maximum of processors to remain active longer. The idea is to increase the number of blocks of diagonals whose this number is lower or equal to half of the first one through the block fragmentation technique. This technique aims to reduce the block size by dividing it into four subblocks. To minimize the number of communication rounds, it begins to subdivide the shortest path matrix with large-size blocks from the largest diagonal (the first diagonal of blocks) to the diagonal located just before the one whose number of blocks is half of the first one. Then, since the number of blocks per diagonal quickly becomes smaller than the number of processors, to increase the number of blocks of these diagonals and enable a maximum of processors to remain active, it fragments all the blocks belonging to the next diagonal until the last one to catch up or exceed by one notch the number of blocks of the first diagonal. It reduces the idle time of processors and promotes the load balancing. This process is repeated $k$ times, after which the block sizes are no longer modify, and the rest of the partitioning becomes traditional because an excessive fragmentation would lead to a drastic rise in the number of communication rounds. After performing $k$ fragmentations, a block belonging to $l$th level of fragmentation has been subdivided $l$ times, $0 \leq l \leq k$.

The four-splitting technique consists of splitting the large-size blocks into four small-size blocks (or subblocks) after performing $k$ fragmentations to reduce the latency time of processors. The subblocks of the blocks belonging to the $l$th level of fragmentation must have the same size than the blocks belonging to the $(l+1)$th level of fragmentation. The goal is to enable processors to start the evaluation of blocks as soon as possible. Hence, evaluating a block by a single processor will consist of computing and communicating each subblock contained in this block.

By denoting $f(p)=\ceil*{\sqrt{2p}}$, $\theta(n,p)=\ceil*{(n+1)/f(p)}$, and $\theta(n,p,l)=\ceil*{\theta(n,p)/2^l}$, formally, we subdivide the shortest path matrix $SP$ into blocks (denoted by $SM(i, j)$), and split the large-size blocks into four subblocks. Thus, a block $SM(i, j)$ belonging to the $l$th level of fragmentation, such that $l < k$, is a $\theta(n,p,l) \times \theta(n,p,l)$ matrix and is subdivided into four subblocks of size $\theta(n,p,l+1) \times \theta(n,p,l+1)$. The blocks of the $k$th level of fragmentation are not splitting into four as these are the smallest blocks. Figures \ref{fig:DAG_4s_part_k_1_p_3_n_32}, \ref{fig:DAG_4s_part_k_2_p_3_n_32}, \ref{fig:DAG_4s_part_k_1_p_5_n_32}, and \ref{fig:DAG_4s_part_k_2_p_5_n_32} depict four scenarios of this partitioning for $n = 31$, $k \in \{1,2\}$, and $p \in \{$3, 4, 5, 6, 7, 8$\}$. The number in each block represents the diagonal in which it belongs. 

\begin{figure*}[!t]
	\centering
	\leavevmode
	\subfloat[][$p \in \{$3, 4$\}$ and $k = 1$]{%
		\label{fig:DAG_4s_part_k_1_p_3_n_32}
		\includegraphics[width=.485\textwidth]{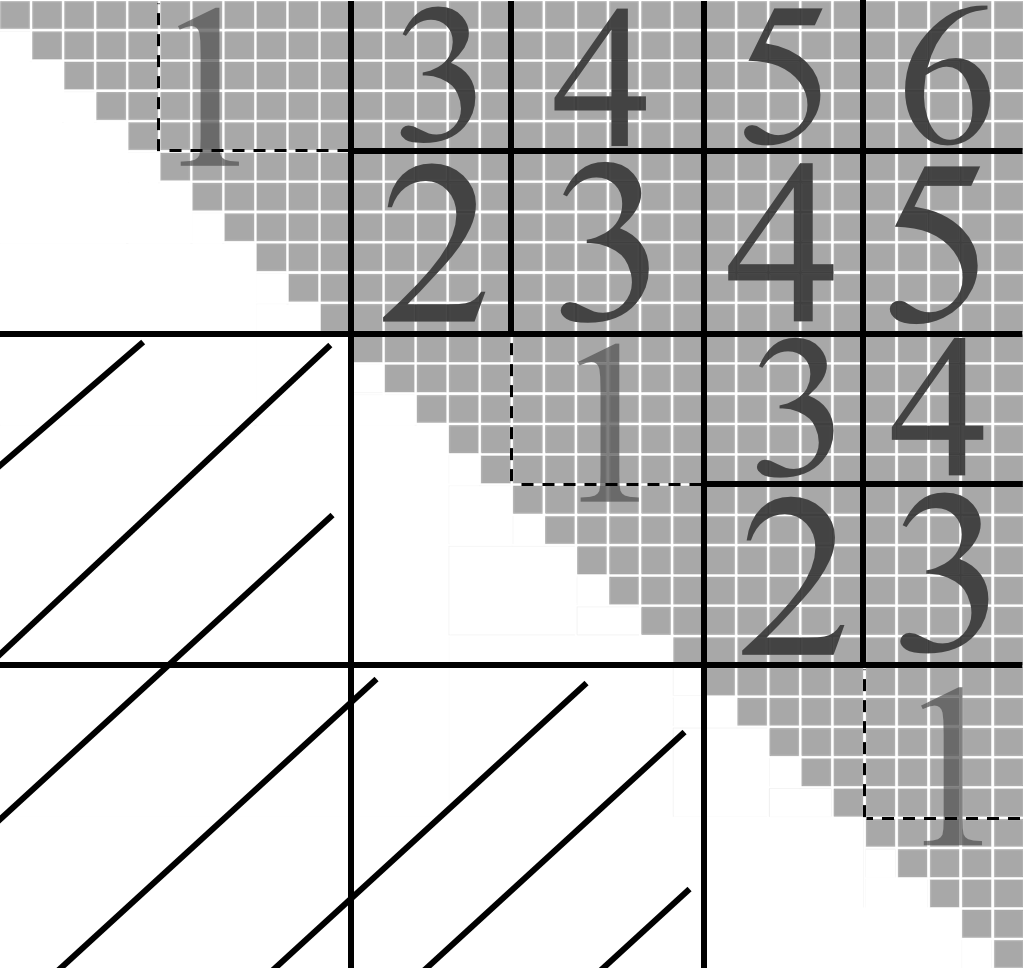}
	}%234525
	\subfloat[][$p \in \{$3, 4$\}$ and $k = 2$]{%
		\label{fig:DAG_4s_part_k_2_p_3_n_32}
		\includegraphics[width=.485\textwidth]{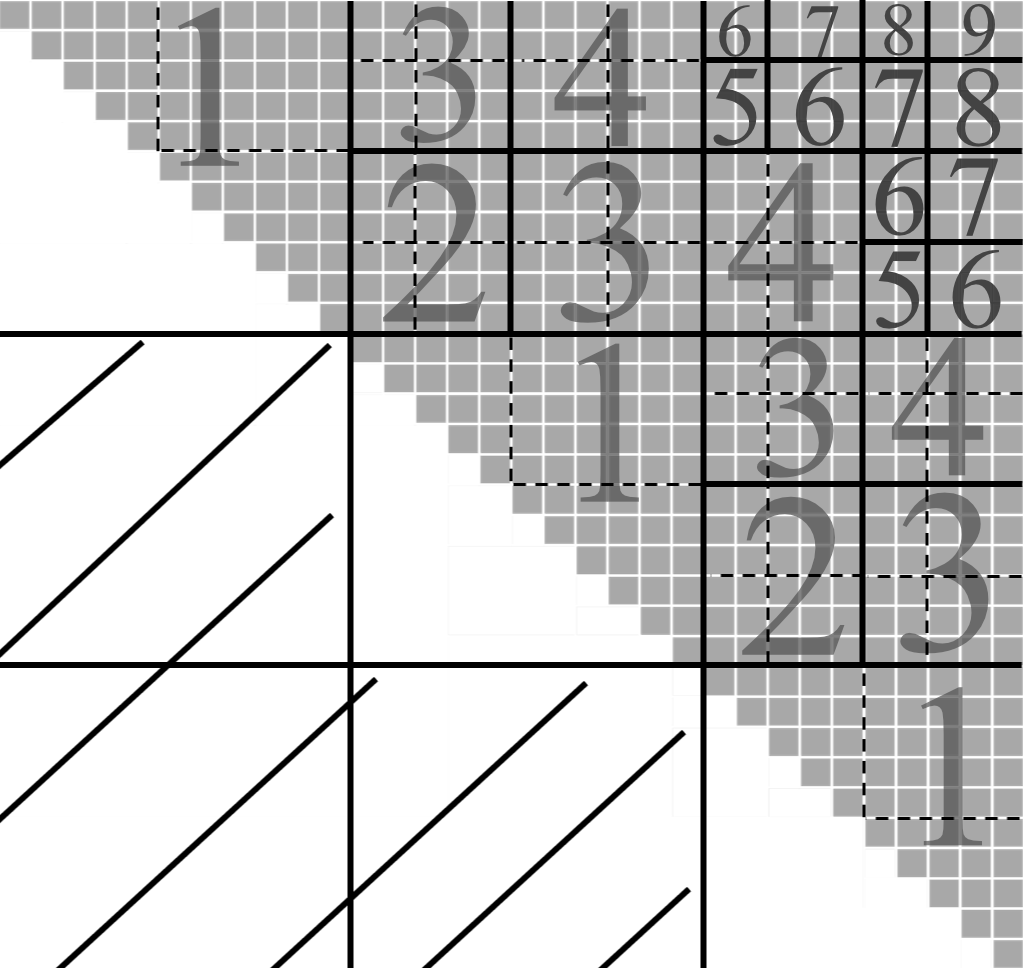}
	}\\
	\subfloat[][$p \in \{$5, 6, 7, 8$\}$ and $k = 1$]{%
		\label{fig:DAG_4s_part_k_1_p_5_n_32}
		\includegraphics[width=.485\textwidth]{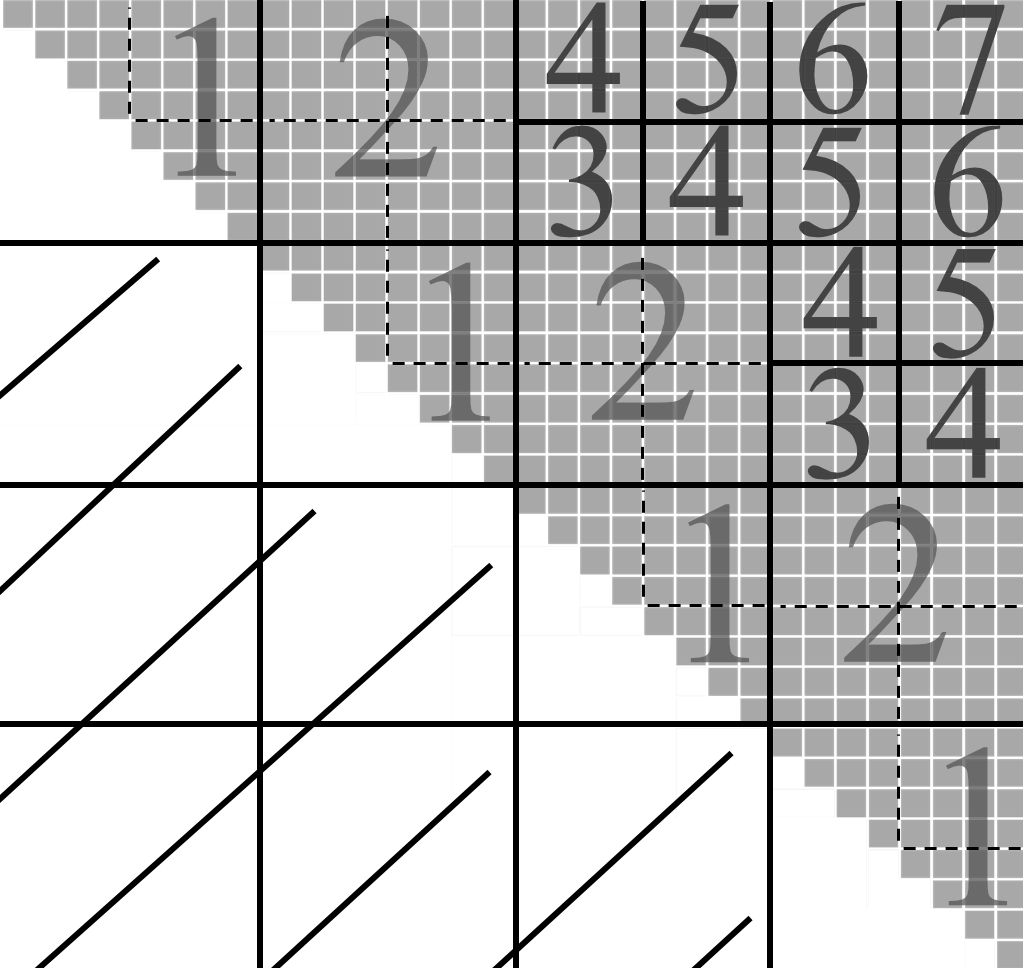}
	}
	\subfloat[][$p \in \{$5, 6, 7, 8$\}$ and $k = 2$]{%
		\label{fig:DAG_4s_part_k_2_p_5_n_32}
		\includegraphics[width=.485\textwidth]{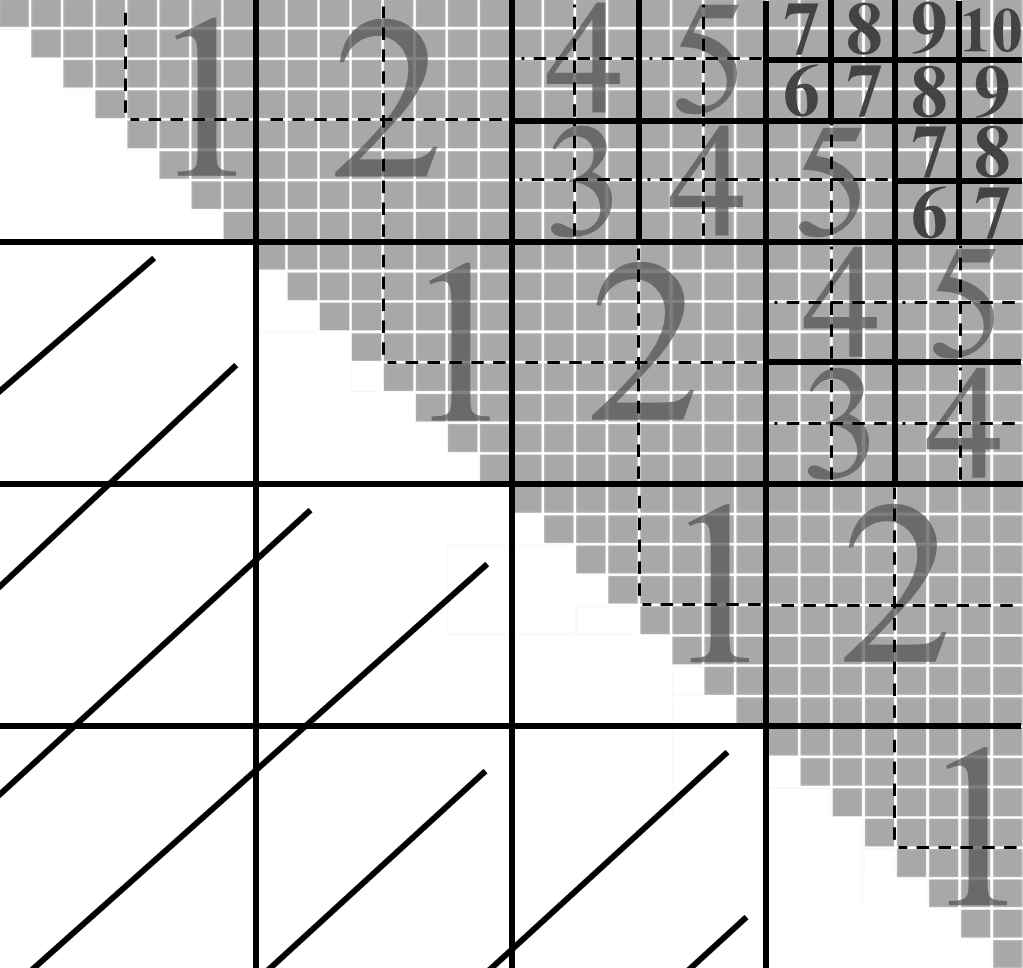}
	}
	\caption[Four-splitting technique of the shortest path matrix for $n = 31$, $k \in \{1,2\}$, and $p \in \{$3, 4, 5, 6, 7, 8$\}$]{Four-splitting technique of the shortest path matrix for $n = 31$, $k \in \{1,2\}$, and $p \in \{$3, 4, 5, 6, 7, 8$\}$. For $p \in \{$3, 4$\}$, $SP$ is partitioned into fifteen blocks and twenty-one subblocks when $k = 1$, and into twenty-four blocks and fifty-seven subblocks when $k = 2$. For $p \in \{$5, 6, 7, 8$\}$, $SP$ is partitioned into nineteen blocks and thirty-six subblocks when $k = 1$, and into twenty-eight blocks and seventy-two subblocks when $k = 2$}
\end{figure*}

\begin{remark} 
	Some relevant points can be noticed about this partitioning :
	\begin{enumerate}
		\item {the blocks of the first diagonal are $\theta(n, p) \times \theta(n, p)$ upper triangular matrices splitting into tree subblocks;}
		\item {all the blocks are not full when $n \bmod \left(2^k\times f(p)\right) \neq 0 $, for example in Figure \ref{fig:DAG_4s_part_k_2_p_3_n_32} where $32 \bmod \left( 2^2\times 3\right)  = 8 \neq 0$);}
		\item {a block belonging to the $l$th level of fragmentation is full if it is a non-triangular matrix of size $\theta(n, p, l) \times \theta(n, p, l)$;}
		\item {one fragmentation increases up to ($\ceil*{f(p)/2} +1$) the number of diagonals;}
		\item {when $f(p)$ is odd, the number of blocks in a diagonal after each fragmentation exceeds by one notch the number of the larger blocks of the first diagonal. This is illustrated in Figure \ref{fig:DAG_4s_part_k_1_p_3_n_32}, where there are three blocks in the first diagonal and four blocks in third diagonal.}
	\end{enumerate}
\end{remark}

\subsection{Blocks' dependency analysis}

\begin{figure}[!t]
	\centering
	\leavevmode
	\subfloat[][Dependencies of $SM(i,j)$]{%
		\includegraphics[width=.585\columnwidth]{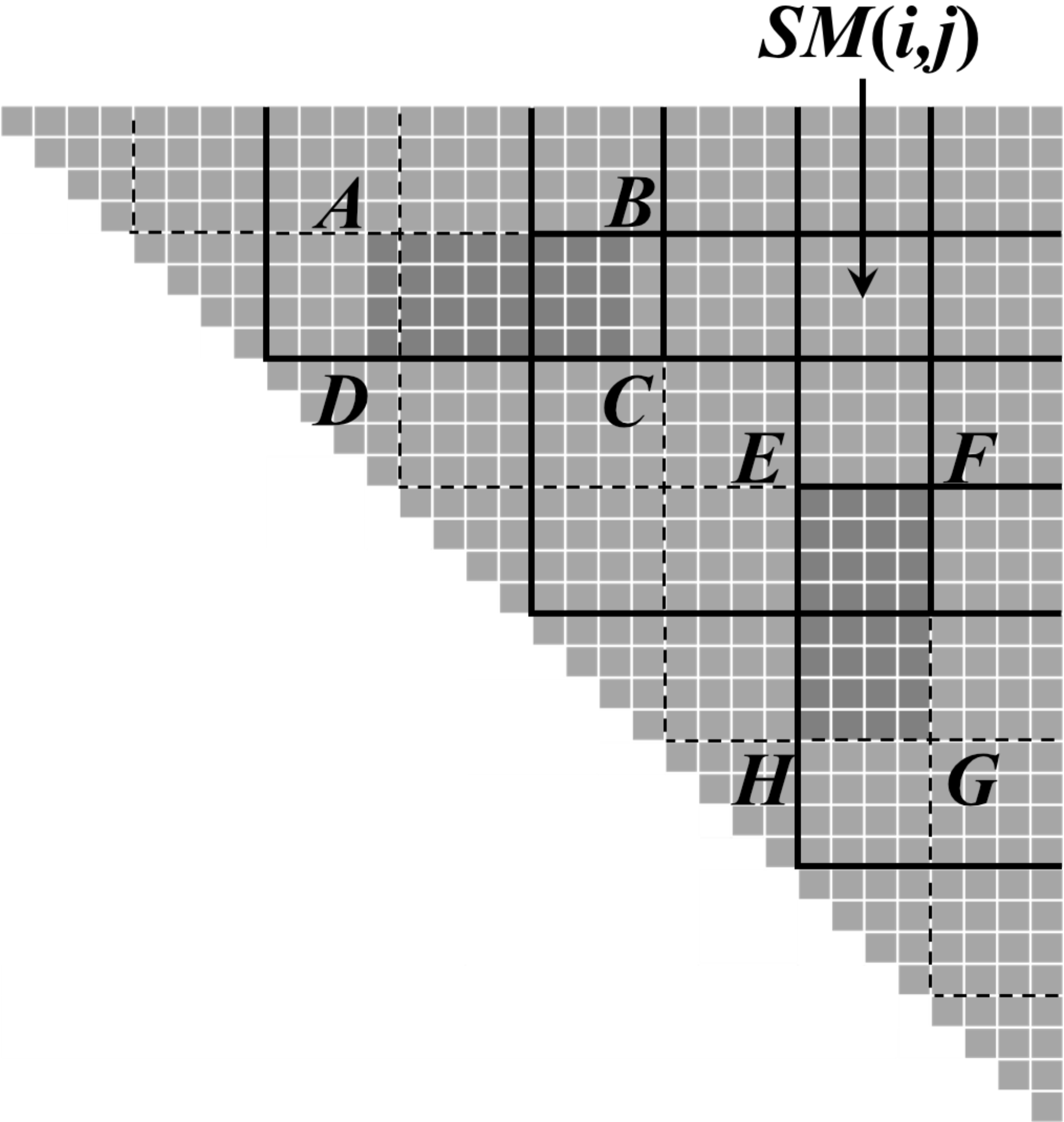}
		\label{fig:DAG_4s_dependency_obst_n_32}
	}
	\subfloat[][Extremities of $SM(i,j)$]{%
		\label{fig:block_delimitation}
		\includegraphics[width=.385\columnwidth]{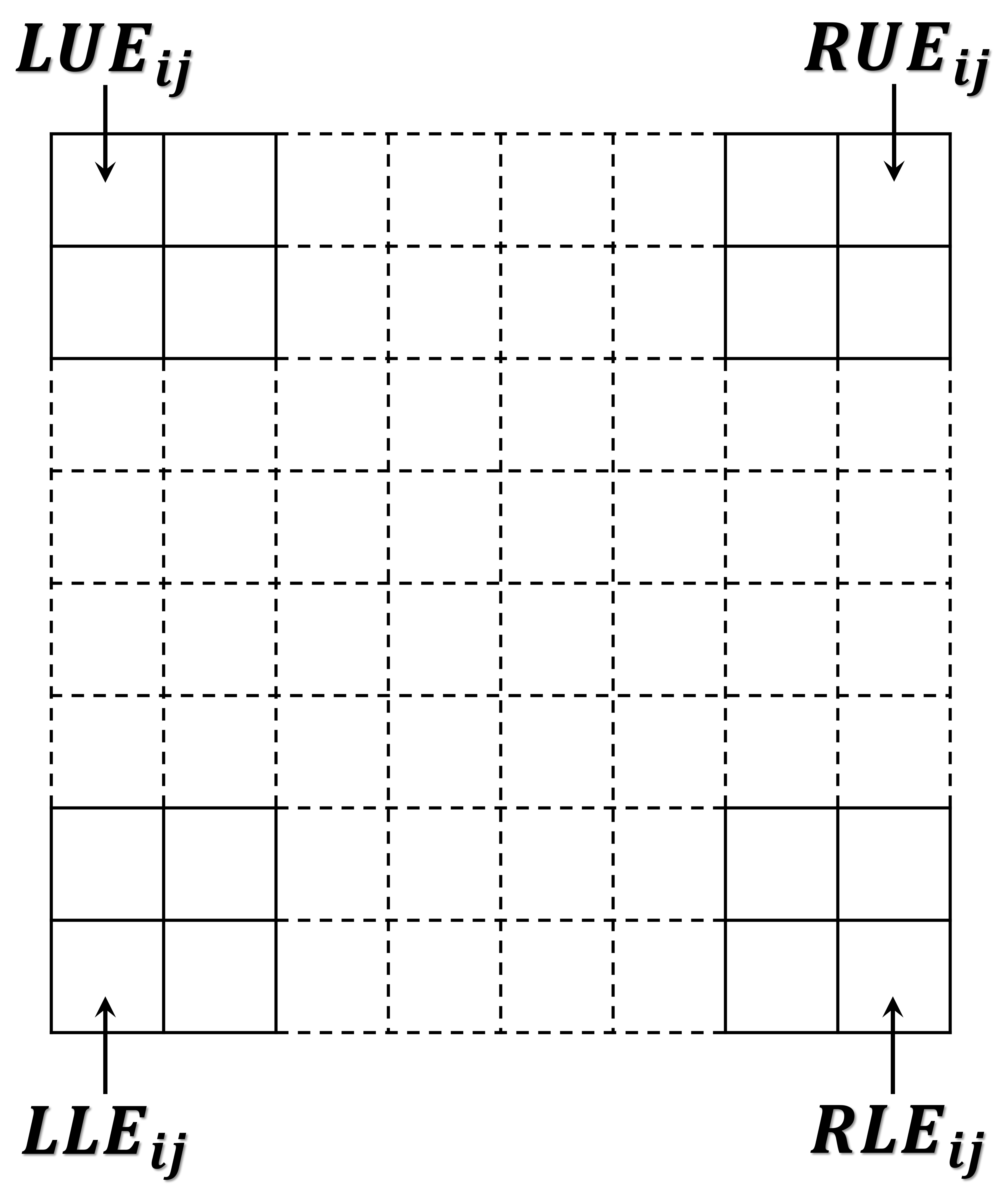}}
	\caption{Dependencies and extremities of a block $SM(i,j)$ after applying the four-splitting technique}
\end{figure}

Figures \ref{fig:DAG_4s_dependency_obst_n_32} and \ref{fig:block_delimitation} show respectively an example of dependencies and extremities of a block $SM(i,j)$ after applying the four-splitting technique. The extremities of $SM(i,j)$ are defined by :

\begin{itemize}
	\item the leftmost upper entry $LUE_{ij} = (i, j - \theta(n,g,l) + 1)$;
	\item the rightmost upper entry $RUE_{ij} = (i,j)$;
	\item the leftmost lower entry $LLE_{ij} = (i + \theta(n,g,l) - 1, j - \theta(n,g,l) + 1)$;
	\item the rightmost lower entry $RLE_{ij} = (i + \theta(n,g,l) - 1,j)$.
\end{itemize}
Figure \ref{fig:DAG_4s_dependency_obst_n_32} depicts eight points ($A$, $B$, $C$, $D$, $E$, $F$, $G$, and $H$) that identify blocks on which the block $SM(i,j)$ depends :

\begin{itemize}
	\item {$A = Tree\left[LUE_{i,j-3\times\theta(n,g,l)-1}\right]$}
	\item {$B = Tree\left[LUE_{i,j-\theta(n,g,l)-2}\right]$}
	\item {$C = Tree\left[LLE_{i,j-\theta(n,g,l)-2}\right]$}
	\item {$D = Tree\left[LLE_{i,j-3\times\theta(n,g,l)-1}\right]$}
	\item {$E = Tree\left[LLE_{i+\theta(n,g,l)+1,j}\right]$}
	\item {$F = Tree\left[RLE_{i+\theta(n,g,l)+1,j}\right]$}
	\item {$G = Tree\left[RLE_{i+3\times\theta(n,g,l),j}\right]$}
	\item {$H = Tree\left[LLE_{i+3\times\theta(n,g,l),j}\right]$}
\end{itemize}
All lower blocks and subblocks (a part of a block) that are in the same row and column than the block $SM(i,j)$ are no longer absolutely required to evaluate $SM(i,j)$ as a consequence of the speedup of Knuth \citep{Knuth1971}. In Figure \ref{fig:DAG_4s_dependency_obst_n_32} for example, only the five most shaded blocks will be needed. In other cases, it could have been just four, three or two blocks. Moreover, it may happen that some nodes of blocks located in extremities are needed instead of the whole blocks. It depends on the input data. Therefore, the speedup of Knuth doesn't enable to estimate the exact load of a block before its evaluation \citep{Myoupo2014}. However, the blocks on which the block $SM(i,j)$ depends aren't located on the same diagonal than $SM(i,j)$. Hence, they can be carried out in parallel.

\subsection{Mapping blocks onto processors}

The mapping used here consists in assigning the blocks of a given diagonal from the upper leftmost corner to the lower rightmost corner. First, the blocks of the first diagonal are assigned from processor $P_0$ to processor $P_{f(p)-1}$. Then, the process is repeated on the next diagonal starting with processor $P_{f(p)}$, and so on until a block has been assigned to each processor. The mapping starts again with processor $P_0$, and continues along the diagonals with a snake-like path until the last diagonal. Figure \ref{fig:DAG_irreg_part_mapping_snake_p_5_n_32} illustrates this mapping on six processors.

\begin{figure}[!t]\centering
	\includegraphics[width=.585\columnwidth]{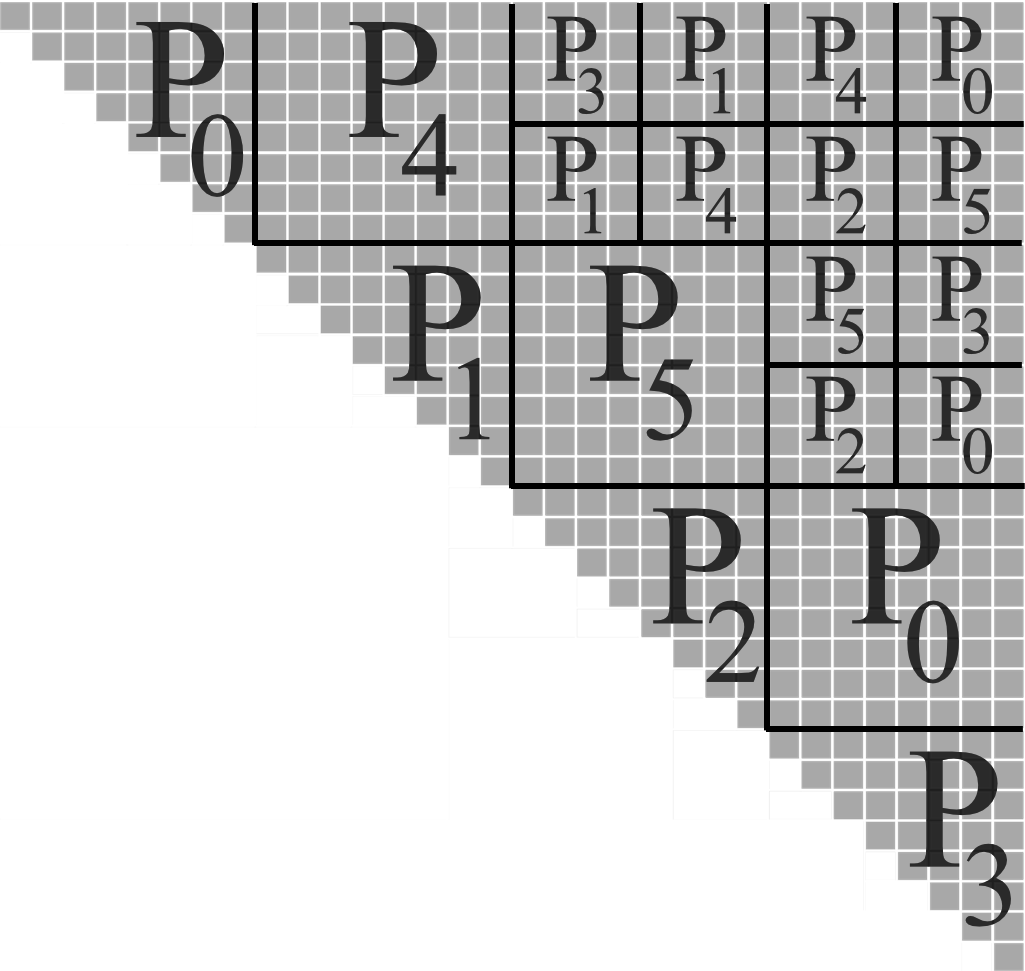}
	\caption{Snake-like mapping on six processors when $k = 1$}
	\label{fig:DAG_irreg_part_mapping_snake_p_5_n_32}
\end{figure}

The communication scheme corresponding to this mapping is simple and easy to implement. Moreover, this mapping enables processors to remain active as soon as possible and minimizes the number of communication rounds. It also ensures load balancing because it enables some processors to evaluate at most one block more than others. For example, Figure \ref{fig:DAG_irreg_part_mapping_snake_p_5_n_32} shows that only processor $P_0$ evaluates one more block than others. However, this mapping does not optimize communications insofar as a processor does not usually hold the upper blocks that are on the same row and column than the block that has been assigned to that processor.

\subsection{Our CGM-based parallel algorithm}

To solve the OBST problem, our CGM-based parallel algorithm evaluates the values of shortest paths to each node of a block, starting from the first diagonal of blocks to the diagonal $f(p) + k \times (\ceil*{f(p)/2} + 1)$, by using Knuth's sequential algorithm in computation rounds; although it does not enable to predict before starting the computation of the block $SM(i, j)$, which will be the values necessary for its evaluation. This is why the blocks are evaluated in a non-progressive way, i.e. the evaluation of blocks of the diagonal $d$ start after computing blocks of the diagonal $(d-1)$. 

The four-splitting technique can be adapted to evaluate the subblocks that are part of a block in a non-progressive fashion compared to the $k$-block splitting technique proposed in \citep{Lacmou2021}. Recall that when the latter is used, the number of subblocks (or $k$-blocks) drastically grows as the number of fragmentation rises. When a subblock is evaluated in  a non-progressive fashion by a processor, it must receive all the data of subblocks that it needs to start computations. However, since the subblocks are numerous and small, computing and communicating them to processors that need them will lead to communication overhead as a huge amount of communication must be done to exchange data \citep{Lacmou2021}. With the four-splitting technique, the blocks are subdivided into at most four, regardless of the number of fragmentation performed. This reduces the number of communication rounds and the number of steps to evaluate a whole block.% (it will be done in at most four steps).

Denote the four subblocks of given block by $LL$, $LU$, $RL$, and $RU$, which respectively correspond to the subblock located in the leftmost lower corner, the leftmost upper corner, the rightmost lower corner, and the rightmost upper corner. Figures \ref{fig:4s_obst_step_0}, \ref{fig:4s_obst_step_1}, \ref{fig:4s_obst_step_2}, \ref{fig:4s_obst_step_3}, \ref{fig:4s_obst_step_4}, and \ref{fig:4s_obst_step_5} illustrate the different computation and communication steps of these subblocks by a processor :

\begin{enumerate}
	\item At step 0 in Figure \ref{fig:4s_obst_step_0}, no subblocks have started to be evaluated. This means that the processor is still receiving data of subblocks on which $LL$ depends.
	\item At step 1 in Figure \ref{fig:4s_obst_step_1}, $LL$ is computed but it is not communicated because a communication is not necessary at this step. Indeed, the processors which will receive $LL$ will not be able to start the computation as soon as possible because the evaluation is done in a non-progressive fashion. The ideal would be to wait to compute $LU$ before communicating them together.
	\item At step 2 in Figure \ref{fig:4s_obst_step_2}, the processor starts by receiving the lower subblocks that are in the same row than $LU$. Then, it computes $LU$. Finally, it communicates $LL$ and $LU$ to processors that need them. For example, the processor $P_5$ which evaluates the block of the second diagonal in Figure \ref{fig:DAG_irreg_part_mapping_snake_p_5_n_32} will send $LL$ and $LU$ to $P_1$ and $P_3$.
	\item At step 3 in Figure \ref{fig:4s_obst_step_3}, the processor receives the lower subblocks that are in the same column than $RL$. Then, it computes $RL$. Finally, it communicates $LL$ and $RL$ to processors that need them. By taking the previous example, $P_5$ will send $LL$ and $RL$ to $P_0$ and $P_2$.
	\item At step 4 in Figure \ref{fig:4s_obst_step_4}, the processor computes $RU$. Thereafter, it communicates $RL$ and $RU$ to processors that need them. By taking the previous example, $P_5$ will send $RL$ and $RU$ to $P_1$ and $P_4$.
	\item At step 5 in Figure \ref{fig:4s_obst_step_5}, all subblocks have been computed. The processor communicates $LU$ and $RU$ to processors that need them. By taking the previous example, $P_5$ will send $LU$ and $RU$ to $P_3$.
\end{enumerate}

Our CGM-based parallel algorithm based on the four-splitting technique to solve the OBST problem is given by Algorithm \ref{alg:obst_cgm_algorithm_4s_technique}.

\begin{figure}[!t]
	\centering
	\leavevmode
	\subfloat[][Step 0]{%
		\label{fig:4s_obst_step_0}
		\includegraphics[width=.313\columnwidth]{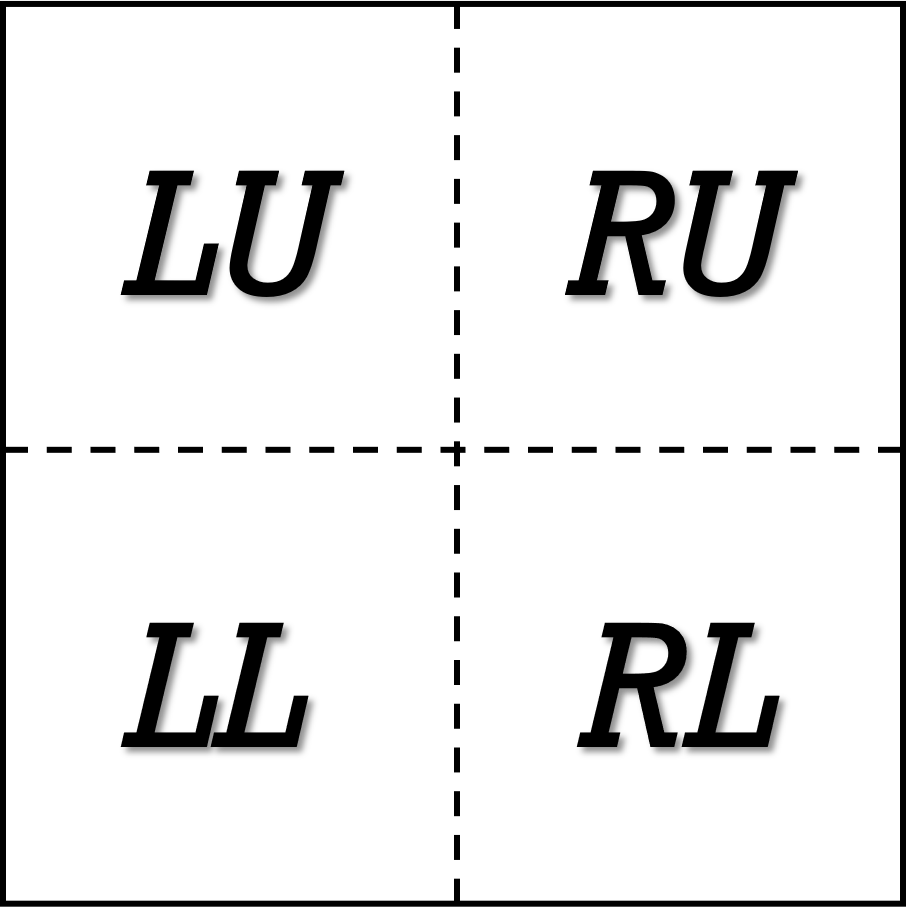}
	}
	\subfloat[][Step 1]{%
		\label{fig:4s_obst_step_1}
		\includegraphics[width=.313\columnwidth]{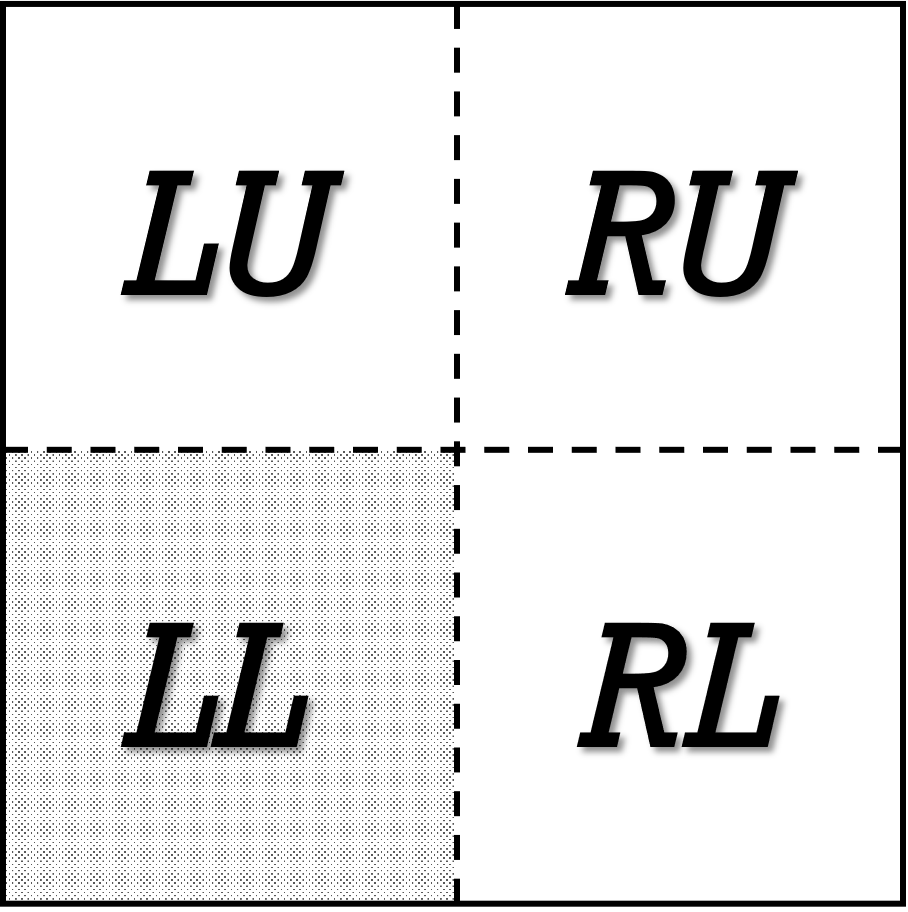}
	}
	\subfloat[][Step 2]{%
		\label{fig:4s_obst_step_2}
		\includegraphics[width=.313\columnwidth]{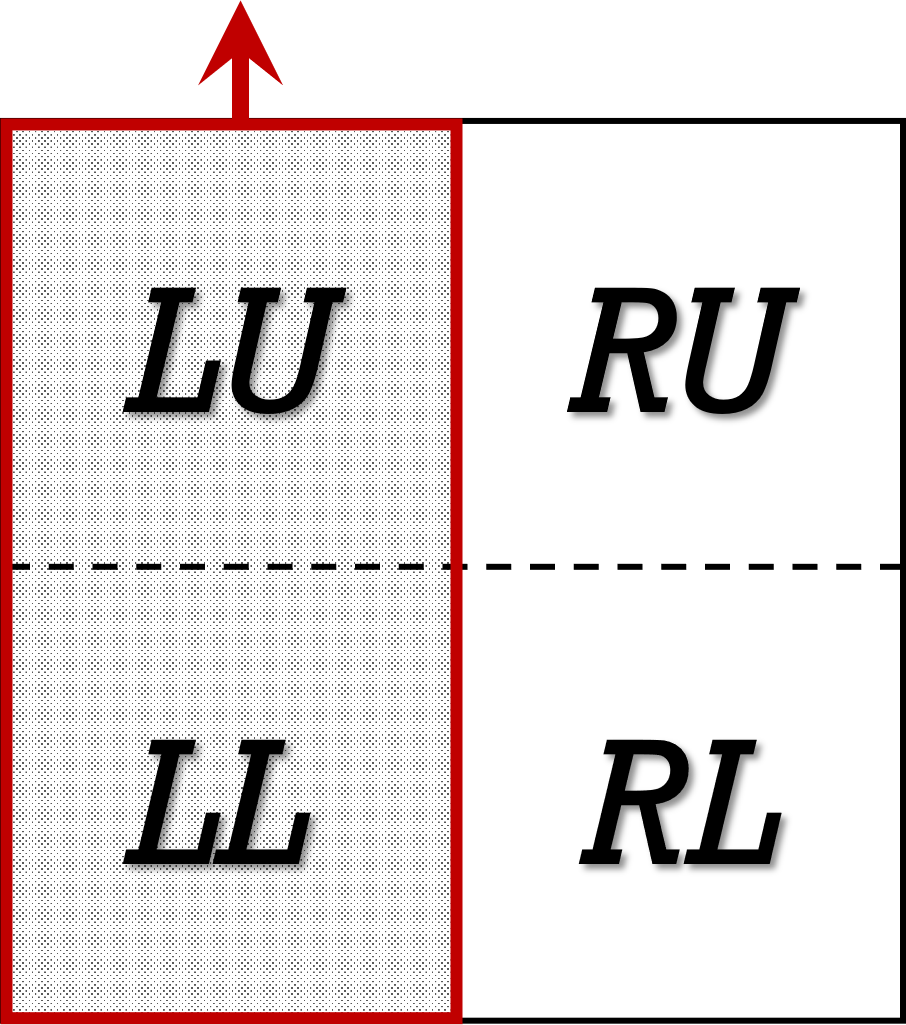}
	}
	\\
	\subfloat[][Step 3]{%
		\label{fig:4s_obst_step_3}
		\includegraphics[width=.325\columnwidth]{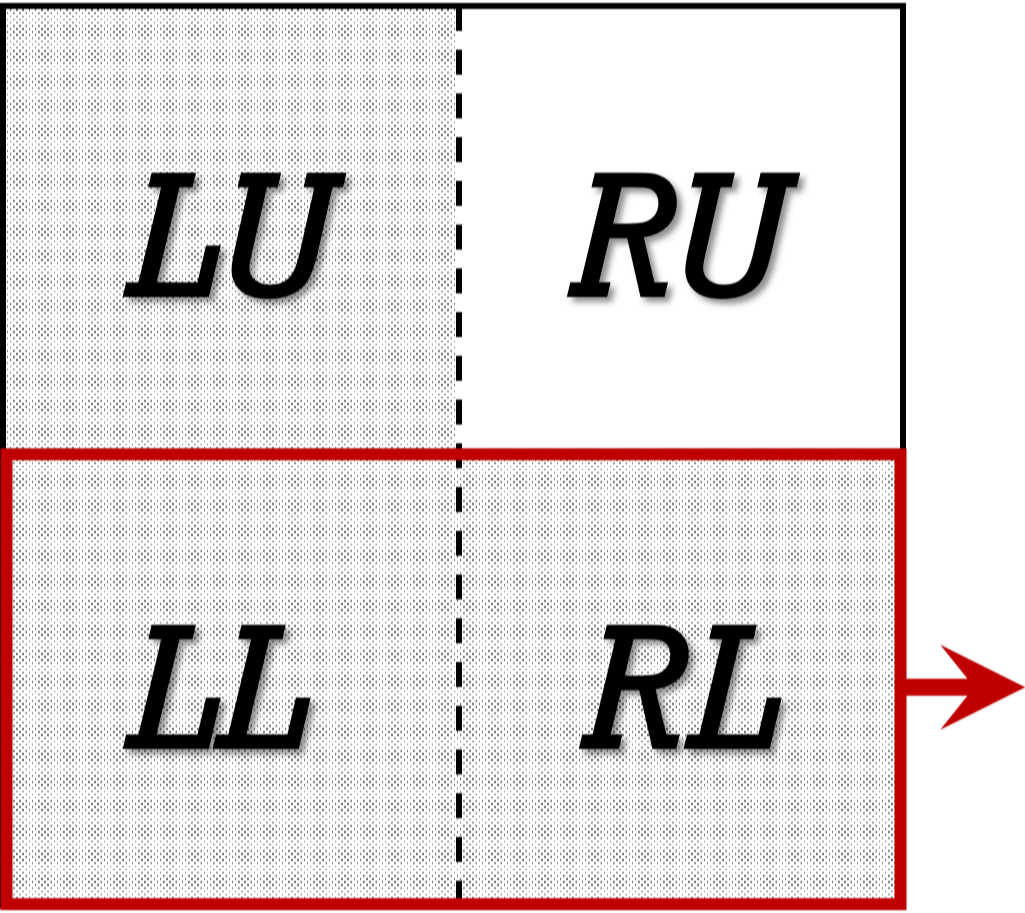}
	}
	\subfloat[][Step 4]{%
		\label{fig:4s_obst_step_4}
		\includegraphics[width=.289\columnwidth]{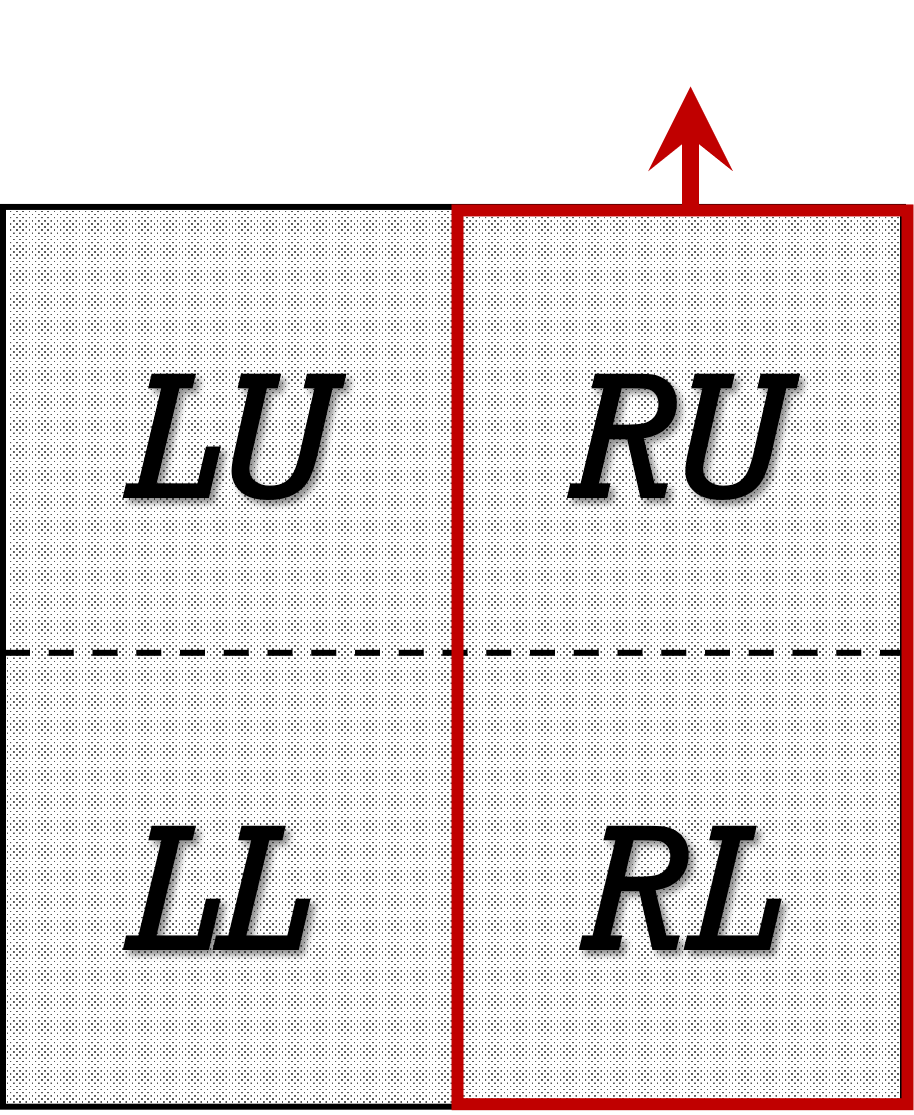}
	}
	\subfloat[][Step 5]{%
		\label{fig:4s_obst_step_5}
		\includegraphics[width=.325\columnwidth]{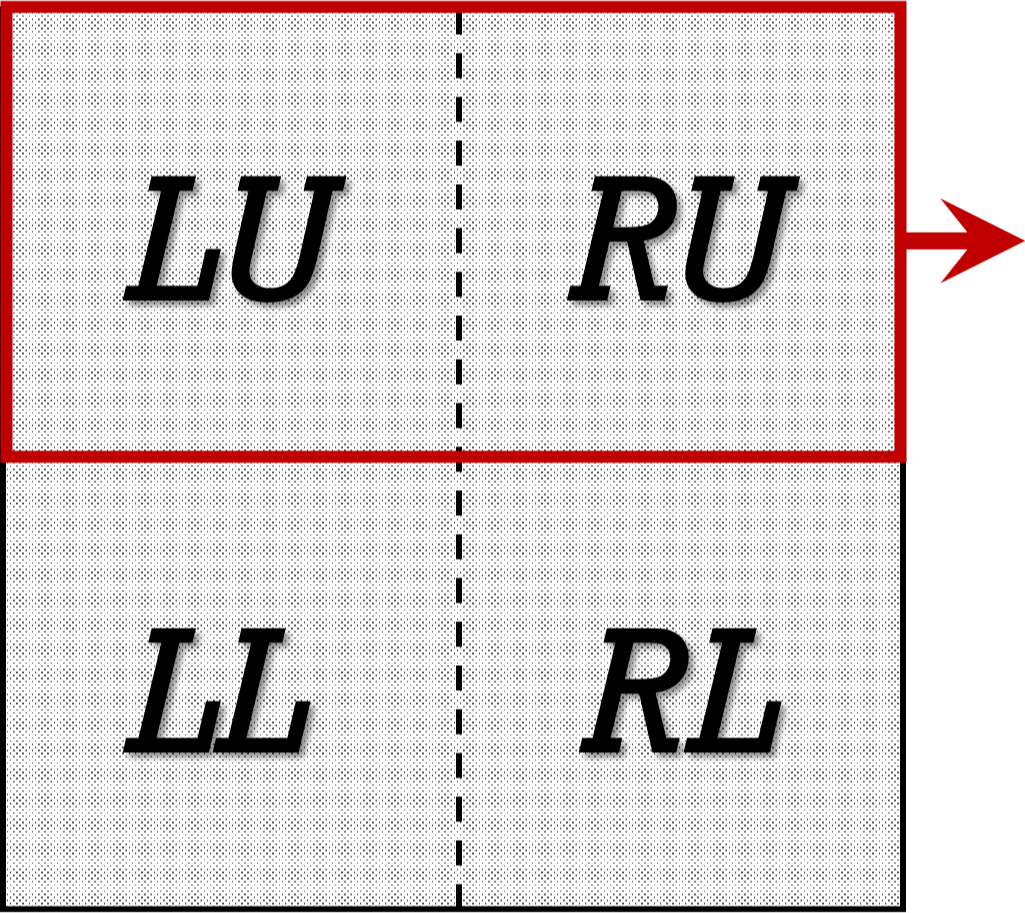}
	}
	\caption[Steps to evaluate the four subblocks of a given block when solving the OBST problem with the four-splitting technique]{Steps to evaluate the four subblocks of a given block when solving the OBST problem with the four-splitting technique. When a subblock is computed and needs to be communicated, it is colored in gray}
\end{figure}

\alglanguage{pseudocode}
\begin{algorithm}[!t]
	\caption{Our CGM-based parallel algorithm based on the four-splitting technique} \label{alg:obst_cgm_algorithm_4s_technique}
	\begin{algorithmic}[1]
		\For{$d = 1$ to $f(p) + k \times (\ceil*{f(p)/2} + 1)$}
		\State \textbf{Computation} of the subblocks $LL$ and $LU$ of blocks belonging to the round $d$ using Algorithm \ref{alg:obst_sequential_algorithm_knuth};
		\State \textbf{Communication} of entries ($Tree$ and $Cut$ tables) of $LL$ and $LU$ required for computing each block of rounds $\left\{d+1, d+2, \ldots, f(p) + k \times (\ceil*{f(p)/2} + 1)\right\}$;
		\State \textbf{Computation} of $RL$ using Algorithm \ref{alg:obst_sequential_algorithm_knuth};
		\State \textbf{Communication} of $LL$ and $RL$;
		\State \textbf{Computation} of $RU$ using Algorithm \ref{alg:obst_sequential_algorithm_knuth};
		\State \textbf{Communication} of $LU$, $RL$, and $RU$;
		\EndFor
	\end{algorithmic}
\end{algorithm}

\begin{theorem}
	Our CGM-based parallel solution based on the four-splitting technique requires $\mathcal{O}\left(n^2/\sqrt{p} \right)$ execution time with $\mathcal{O}\left( k \sqrt{p}\right)$ communication rounds in the worst case to solve the OBST problem. %$k$ is the number of fragmentations performed.
\end{theorem}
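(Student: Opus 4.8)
The plan is to bound separately the total computation time (across all computation rounds) and the total number of communication rounds, and then argue that neither the ``non-progressive'' evaluation nor the four-splitting overhead changes these asymptotic orders.

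\textbf{Counting communication rounds.} First I would count the diagonals. By the partitioning construction, the first diagonal has $f(p)=\ceil*{\sqrt{2p}}$ blocks, and each of the $k$ fragmentations adds at most $\ceil*{f(p)/2}+1$ new diagonals (this is point 4 of the Remark). Hence the total number of block-diagonals is $f(p)+k\,(\ceil*{f(p)/2}+1)=\mathcal{O}\!\left(k\sqrt{p}\right)$, which matches the outer \textbf{for} loop of Algorithm~\ref{alg:obst_cgm_algorithm_4s_technique}. Within each iteration of that loop there is a fixed number (namely three) of communication statements for the subblocks $LL,LU,RL,RU$; since that number is a constant independent of $n$, $p$, and $k$, the total number of communication rounds is $\mathcal{O}\!\left(k\sqrt{p}\right)$. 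I would also note that every message a processor sends at a given round is wrapped into a single long message in accordance with the CGM model, so the round count is exactly the number of communication statements executed, not the number of (sender, receiver) pairs.

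\textbf{Counting computation time.} For the execution time I would use the fact (from Knuth's speedup, Algorithm~\ref{alg:obst_sequential_algorithm_knuth}) that the whole DP table of size $n$ is computed sequentially in $\mathcal{O}(n^2)$ time, because the monotonicity property makes the amortized cost of each of the $\Theta(n^2)$ entries constant. The key structural observation is that the blocks assigned to one processor partition a subset of the cells of $SP$, and by the snake-like mapping each processor receives at most one block more than any other on every diagonal, so each processor is responsible for $\mathcal{O}(n^2/p)$ cells in total when blocks are $\theta(n,p,l)\times\theta(n,p,l)$ matrices. Running Knuth's algorithm restricted to those cells costs, by the amortized argument, $\mathcal{O}(n^2/p)$ per processor over the whole execution; summing the additive overhead (the $w(i,j)$ lookups, the four-subblock bookkeeping, the $\mathcal{O}(k\sqrt p)$ loop control) contributes only lower-order terms. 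Careful: the honest bound here is $\mathcal{O}(n^2/\sqrt p)$ rather than $\mathcal{O}(n^2/p)$ because, as emphasized after Algorithm~\ref{alg:obst_sequential_algorithm_knuth} and in the blocks' dependency analysis, Knuth's speedup does not let one predict or evenly distribute the per-diagonal work: on a diagonal with $\Theta(\sqrt p)$ blocks, one processor may carry the bulk of that diagonal's comparisons, so the worst-case per-processor load over the $\Theta(\sqrt p \cdot k)$-ish diagonals is $\mathcal{O}(n^2/\sqrt p)$. I would make this precise by bounding the work on a single diagonal of blocks by $\mathcal{O}(n^2/p)$ total (it is a band of the table) times the worst-case imbalance factor $\mathcal{O}(\sqrt p)$, then summing over the $\mathcal{O}(k\sqrt p)$ diagonals and observing the geometric decay of block sizes across fragmentation levels keeps the sum at $\mathcal{O}(n^2/\sqrt p)$.

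\textbf{The main obstacle.} The hard part will be the load-balancing bound, i.e.\ justifying that ``worst-case imbalance'' really is only a $\sqrt p$ factor and not worse, given that Knuth's speedup destroys the uniform-cost-per-entry guarantee and that some of a processor's blocks may turn out to need only a few boundary nodes of their predecessors rather than whole blocks. I would handle this the way \cite{Myoupo2014,Kengne2019} do: upper-bound the cost of evaluating any single block of level $l$ by the \emph{classical} $\mathcal{O}(\theta(n,p,l)^3)$ Godbole cost (which dominates Knuth's cost on that block), show that the at-most-$\ceil*{f(p)/2}+1$ worth of extra blocks one processor may hold on the fragmented diagonals sum geometrically, and verify the resulting series is $\mathcal{O}(n^2/\sqrt p)$. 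The non-progressive evaluation only delays when a subblock is computed, not how much is computed, so it affects latency (hence the practical motivation of the paper) but not the asymptotic execution-time bound; I would state this explicitly so the reader sees why Algorithm~\ref{alg:obst_cgm_algorithm_4s_technique}'s six computation/communication statements per round collapse to constants in the final count.
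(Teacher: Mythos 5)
Your count of the communication rounds is correct and is essentially the paper's: the outer loop of the algorithm runs over the $f(p)+k\left(\ceil*{f(p)/2}+1\right)=\mathcal{O}\left(k\sqrt{p}\right)$ block-diagonals and each iteration contains a constant number of communication statements; the paper merely expands this count subblock by subblock as $E=3+4\left(\floor*{S/2}-1\right)+4(k-1)\left(\ceil*{S/2}+1\right)+S+\beta$ with $S=f(p)$.

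The computation-time half is where your accounting breaks down. First, a level-$0$ diagonal of blocks contains $\Theta(\sqrt{p})$ blocks of $\Theta(n^2/p)$ entries each, hence $\Theta(n^2/\sqrt{p})$ entries in total, not $\mathcal{O}(n^2/p)$; multiplying your per-diagonal total by a $\sqrt{p}$ ``imbalance factor'' and summing over $\mathcal{O}(k\sqrt{p})$ diagonals then gives $\mathcal{O}(kn^2)$ under any consistent reading, not $\mathcal{O}(n^2/\sqrt{p})$. No imbalance factor is needed: the snake-like mapping assigns each processor $\mathcal{O}(1)$ blocks per block-diagonal, and the paper simply sums, per processor, one block per diagonal, charging a level-$l$ subblock $\mathcal{O}\left(n^2/(4^{l+1}p)\right)$ --- a constant per entry --- so that the $\mathcal{O}(\sqrt{p})$ level-$0$ diagonals contribute $\mathcal{O}(n^2/\sqrt{p})$ and the higher fragmentation levels form a geometric series dominated by that term. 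Second, and more seriously, your proposed repair for the load-balancing worry --- upper-bounding each block by Godbole's classical $\mathcal{O}\left(\theta(n,p,l)^3\right)$ cost --- does not prove the theorem: a level-$0$ block would then cost $\mathcal{O}\left(n^3/p^{3/2}\right)$, and over the $\Theta(\sqrt{p})$ level-$0$ diagonals a single processor accumulates $\mathcal{O}(n^3/p)$, which exceeds $n^2/\sqrt{p}$ whenever $n>\sqrt{p}$, i.e.\ always in the coarse-grained regime $n\gg p$. The stated bound genuinely requires the per-entry amortized cost of Knuth's speedup to be applied blockwise (this is exactly what the paper asserts when it charges a subblock $\mathcal{O}\left(n^2/(4^{l+1}p)\right)$); you correctly sense that this is the delicate point --- the monotonicity telescoping is only guaranteed over a full DP diagonal, not over its restriction to one block --- but replacing it by the cubic bound weakens the estimate past the point where the claimed running time follows.
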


\begin{proof}
	Let $S = f(p)=\ceil*{\sqrt{2p}}$ and $\beta = (S \bmod 2)$. A single processor computes and communicates:
	
	\begin{enumerate}
		\item {three subblocks at the first diagonal of blocks;}
		\item {four subblocks from the diagonal of blocks 2 to $\floor*{S/2} - 1$;}
		\item {four subblocks for each $(\ceil*{S/2} + 1)$ diagonals of blocks belonging to the $l$th level of fragmentation such that $1 \le l < k$;}
		\item {one subblock for each $(S + \beta)$ diagonals of blocks belonging to the $k$th level of fragmentation.}
	\end{enumerate}
	During the computation rounds where Knuth's sequential algorithm is used to evaluate blocks, evaluating each subblock of a block belonging to the $l$th level of fragmentation requires $\mathcal{O}\left(\frac{n^2}{2^{2(l+1)} \times (2p)^{2/2}}\right) = \mathcal{O}\left(\frac{n^2}{4^{l+1} \times p}\right)$ local computation time. So, the evaluation of each diagonal required :
	\[
		\begin{split}
			D & = 3 \times \mathcal{O}\left(\frac{n^2}{4\times p}\right) + 4\left(\floor*{\frac{S}{2}} - 1\right) \times \mathcal{O}\left(\frac{n^2}{4\times p}\right) + 4\left(\ceil*{\frac{S}{2}}+1\right) \times \mathcal{O}\left(\frac{n^2}{4^2 \times p}\right) + \\
			& 
			4\left(\ceil*{\frac{S}{2}}+1\right) \times \mathcal{O}\left(\frac{n^2}{4^3 \times p}\right) + \cdots + 4\left(\ceil*{\frac{S}{2}}+1\right) \times \mathcal{O}\left(\frac{n^2}{4^k \times p}\right) + \\
			& \left(\ceil*{\frac{S}{2}}+1\right) \times \mathcal{O}\left(\frac{n^2}{4^k \times p}\right) + \left(S + \beta\right) \times \mathcal{O}\left(\frac{n^2}{4^k \times p}\right)\\
			& = \mathcal{O}\left(\frac{n^2}{\sqrt{p}}\right)
		\end{split}
	\]
	The number of communication rounds is equal to :
	\[
		\begin{split}
			E & = 3 + 4\left(\floor*{\frac{S}{2}} - 1\right) + 4 (k-1) \left(\ceil*{\frac{S}{2}} + 1\right) + S + \beta \\
			& = \mathcal{O}\left( k \sqrt{p}\right)
		\end{split}
	\]
	Therefore, this algorithm requires $\mathcal{O}\left(n^2/\sqrt{p}\right)$ execution time with $\mathcal{O}\left( k \sqrt{p}\right)$ communication rounds.
\end{proof}

\section{Experimental results}

\subsection{Experimental setups}

This section highlights the results obtained from experimentations of our CGM-based parallel solutions based on the four-splitting technique to solve the OBST problem, and compares them with the best previous solutions \citep{Kengne2019,Myoupo2014}. These experimentations have been performed on the cluster dolphin of the MatriCS platform of the University of Picardie Jules Verne\footnote{\url{https://www.u-picardie.fr/matrics}} and on our Raspberry Pi cluster (described hereunder). The MatriCS platform is composed of 48 nodes called thin nodes with 48 $\times$ 128GB of RAM, and 12 named thick nodes with 12 $\times$ 512GB of RAM. Each node is made of two Intel Xeon Processor E5-2680 V4 (35M Cache, 2.40 GHz), and each of them consists of 14 cores. All nodes are interconnected with OmniPath links providing 100Gbps throughput. These algorithms have been implemented in the C programming language\footnote{The source codes are available here : \url{https://github.com/compiii/CGM-Sol-for-OBST}.}, and on the operating system CentOS Linux release 7.6.1810. The MPI library (OpenMPI version 1.10.4) has been used for inter-processor communication. These algorithms have been executed on five thin nodes. The results are presented following the different values of $(n, p, k)$, where :
\begin{itemize}
	\item $n$ is the data size, with values in the set $\langle$4096, 8192, 12288, 16384, 20480, 24576, 28672, 32768, 36864, 40960$\rangle$.% We have generated the frequencies for each data size.
	\item $p$ is the number of processors, with values in the set $\langle$1, 32, 64, 96, 128$\rangle$. When $p = 1$, the sequential algorithm of Knuth \cite{Knuth1971} is carried out.
	\item $k$ is the number of fragmentations performed, with values in the set $\langle$0, 1, 2, 3, 4, 5$\rangle$. When $k = 0$, our solution is similar to the one in \citep{Myoupo2014}.
\end{itemize}
Tables \ref{tab:chap:3:obst_total_exec_time_for_p_32_four_splitting_technique_matrics}, \ref{tab:chap:3:obst_speedup_efficiency_for_p_32_four_splitting_technique_matrics}, \ref{tab:chap:3:obst_total_exec_time_for_n_36864_40960_four_splitting_technique_matrics}, \ref{tab:chap:3:obst_speedup_efficiency_for_n_36864_40960_four_splitting_technique_matrics}, and \ref{tab:chap:3:obst_total_exec_time_speedup_for_p_32_four_splitting_technique_compiii}
show the total execution time, the speedup, and the efficiency of our CGM-based parallel solutions to solve the OBST problem. In these tables, the solution based on the irregular partitioning technique is referred to \textit{frag} and the one based on the four-splitting technique is referred to \textit{4s}. Figures \ref{fig:chap:3:obst_evol_4s_com_time_p_32_compared_to_frag}, \ref{fig:chap:3:obst_evol_4s_com_time_n_40960_compared_to_frag}, \ref{fig:chap:3:obst_evol_4s_time_calc_com_idle_exec_p_32}, \ref{fig:chap:3:obst_evol_4s_time_calc_com_idle_exec_n_40960}, \ref{fig:chap:3:obst_evol_4s_load_balancing_k_0_and_1}, \ref{fig:chap:3:obst_evol_4s_load_balancing_k_0_and_2}, \ref{fig:chap:3:obst_evol_4s_exec_time_p_32_compared_to_frag}, \ref{fig:chap:3:obst_evol_4s_exec_time_n_40960_compared_to_frag}, \ref{fig:chap:3:obst_evol_4s_time_calc_com_idle_exec_p_32_matrics}, and \ref{fig:chap:3:obst_evol_4s_time_calc_com_idle_exec_compiii} are drawn from the results obtained in Tables \ref{tab:chap:3:obst_total_exec_time_for_p_32_four_splitting_technique_matrics}, \ref{tab:chap:3:obst_speedup_efficiency_for_p_32_four_splitting_technique_matrics}, \ref{tab:chap:3:obst_total_exec_time_for_n_36864_40960_four_splitting_technique_matrics}, \ref{tab:chap:3:obst_speedup_efficiency_for_n_36864_40960_four_splitting_technique_matrics}, and \ref{tab:chap:3:obst_total_exec_time_speedup_for_p_32_four_splitting_technique_compiii}.

\begin{sidewaystable}[!t]
	\centering
	\caption{Total execution time (in seconds) for $n \in \{4096, \ldots, 40960 \}$, $p = 32$, and $k \in \{0, \ldots, 5\}$ on the MatriCS platform}\label{tab:chap:3:obst_total_exec_time_for_p_32_four_splitting_technique_matrics}
	\begin{tabular}{|l|r|r|rr|rrrrr|}
		\cline{4-10}
		\multicolumn{3}{c|}{}  & \multicolumn{2}{c|}{\textit{frag}} & \multicolumn{5}{c|}{\textit{4s}}\\\hline
		$n$   & $p = 1$  & $k = 0$  & $k=1$    & $k=2$    & $k=1$ 	 & $k=2$ 	& $k=3$   & $k=4$ 	& $k=5$ \\\hline
		4096  & 62.34    & 20.58    & 16.68    & 13.11    & 11.50    & 9.90     & 8.94    & 8.58    & 8.52    \\
		8192  & 514.32   & 180.82   & 136.43   & 100.02   & 97.34    & 77.30    & 70.76   & 69.10   & 68.64   \\
		12288 & 1784.19  & 621.52   & 476.46   & 343.27   & 338.27   & 266.48   & 243.43  & 236.72  & 235.48  \\
		16384 & 4315.43  & 1582.57  & 1142.21  & 818.11   & 813.47   & 634.41   & 585.65  & 572.21  & 568.37  \\
		20480 & 8546.86  & 2919.91  & 2308.19  & 1600.75  & 1605.26  & 1244.97  & 1147.37 & 1123.66 & 1116.53 \\
		24576 & 14899.60 & 5970.20  & 3920.49  & 2785.83  & 2793.95  & 2151.68  & 1977.78 & 1937.93 & 1930.08 \\
		28672 & 23802.78 & 8527.82  & 6078.21  & 4428.91  & 4458.84  & 3444.42  & 3155.25 & 3089.17 & 3076.31 \\
		32768 & 33263.71 & 12533.26 & 10130.78 & 6607.36  & 6682.07  & 5153.27  & 4718.72 & 4616.36 & 4587.60 \\
		36864 & 48356.46 & 17658.48 & 13514.64 & 9451.97  & 9578.16  & 7353.35  & 6729.28 & 6572.38 & 6529.41 \\
		40960 & 69722.12 & 24910.11 & 20933.65 & 12986.77 & 13219.62 & 10120.25 & 9252.37 & 9013.49 & 8961.54 \\\hline
	\end{tabular}
	
		\caption{Speedup and efficiency (in \%) for $n \in \{4096, \ldots, 40960 \}$, $p = 32$, and $k \in \{1, \ldots, 5\}$ on the MatriCS platform}\label{tab:chap:3:obst_speedup_efficiency_for_p_32_four_splitting_technique_matrics}
	\begin{tabular}{|l|rr|rrrrr|rr|rrrrr|}
		\cline{2-15}
		\multicolumn{1}{c|}{}& \multicolumn{7}{c|}{Speedup}  & \multicolumn{7}{c|}{Efficiency}\\
		\cline{2-15}
		\multicolumn{1}{c|}{}  & \multicolumn{2}{c|}{\textit{frag}} & \multicolumn{5}{c|}{\textit{4s}}& \multicolumn{2}{c|}{\textit{frag}} & \multicolumn{5}{c|}{\textit{4s}}\\\hline
		\diagbox[height=0.45cm]{$n$}{$k$}   & 1    & 2    & 1 	 & 2 	& 3   & 4 	& 5 & 1    & 2    & 1 	 & 2 	& 3   & 4 	& 5 \\\hline
		4096  & 3.74  & 4.75 & 5.42  & 6.29  & 6.97  & 7.27  & 7.32 & 11.68 & 14.85 & 16.94 & 19.67 & 21.79 & 22.71 & 22.87 \\
		8192  & 3.77  & 5.14 & 5.28  & 6.65  & 7.27  & 7.44  & 7.49 & 11.78 & 16.07 & 16.51 & 20.79 & 22.71 & 23.26 & 23.42 \\
		12288 & 3.74  & 5.20 & 5.27  & 6.70  & 7.33  & 7.54  & 7.58 & 11.70 & 16.24 & 16.48 & 20.92 & 22.90 & 23.55 & 23.68 \\
		16384 & 3.78  & 5.27 & 5.30  & 6.80  & 7.37  & 7.54  & 7.59 & 11.81 & 16.48 & 16.58 & 21.26 & 23.03 & 23.57 & 23.73 \\
		20480 & 3.70  & 5.34 & 5.32  & 6.87  & 7.45  & 7.61  & 7.65 & 11.57 & 16.69 & 16.64 & 21.45 & 23.28 & 23.77 & 23.92 \\
		24576 & 3.80  & 5.35 & 5.33  & 6.92  & 7.53  & 7.69  & 7.72 & 11.88 & 16.71 & 16.67 & 21.64 & 23.54 & 24.03 & 24.12 \\
		28672 & 3.92  & 5.37 & 5.34  & 6.91  & 7.54  & 7.71  & 7.74 & 12.24 & 16.80 & 16.68 & 21.60 & 23.57 & 24.08 & 24.18 \\
		32768 & 3.28  & 5.03 & 4.98  & 6.45  & 7.05  & 7.21  & 7.25 & 10.26 & 15.73 & 15.56 & 20.17 & 22.03 & 22.52 & 22.66 \\
		36864 & 3.58  & 5.12 & 5.05  & 6.58  & 7.19  & 7.36  & 7.41 & 11.18 & 15.99 & 15.78 & 20.55 & 22.46 & 22.99 & 23.14 \\
		40960 & 3.33  & 5.37 & 5.27  & 6.89  & 7.54  & 7.74  & 7.78 & 10.41 & 16.78 & 16.48 & 21.53 & 23.55 & 24.17 & 24.31 \\\hline
	\end{tabular}
\end{sidewaystable}

\begin{sidewaystable}[!t]
	\centering
			\caption{Total execution time (in seconds) for $n \in \{ 36864, 40960 \}$, $p \in \{64, 96, 128\}$, and $k \in \{ 0, \ldots, 5\}$ on the MatriCS platform}\label{tab:chap:3:obst_total_exec_time_for_n_36864_40960_four_splitting_technique_matrics}
	\begin{tabular}{|l|r|rr|rrrrr|}
		\cline{3-9}
		\multicolumn{2}{c|}{}  & \multicolumn{2}{c|}{\textit{frag}} & \multicolumn{5}{c|}{\textit{4s}}\\\hline
		\diagbox[height=0.45cm]{$p$}{$k$}   & 0  & 1    & 2    & 1 	 & 2 	& 3   & 4 	& 5 \\\hline
		\multicolumn{9}{|c|}{$n = 36864$} \\\hline
		64  & 11671.40 & 7610.84  & 6586.99  & 6849.69 & 5268.74 & 4744.92 & 4609.23 & 4492.70  \\
		96  & 10128.02 & 6573.25  & 5552.72  & 5811.06 & 4336.53 & 3960.81 & 3863.48 & 3846.97  \\
		128 & 8951.40  & 5788.29  & 5126.66  & 5166.80 & 3858.00 & 3488.67 & 3407.10 & 3391.69  \\\hline
		\multicolumn{9}{|c|}{$n = 40960$} \\\hline
		64  & 16045.47 & 10456.06 & 9057.29 & 9564.80 & 7256.84 & 6333.78 & 6208.79 & 6177.57 \\
		96  & 13928.60 & 9031.15  & 7813.92 & 7980.85 & 5954.35 & 5415.29 & 5299.00 & 5276.19 \\
		128 & 12305.95 & 7949.95  & 6707.94 & 7115.22 & 5316.16 & 4815.67 & 4688.91 & 4670.79 \\\hline
	\end{tabular}
	
	\caption{Speedup and efficiency (in \%) for $n \in \{ 36864, 40960 \}$, $p \in \{64, 96, 128\}$, and $k \in \{ 1, \ldots, 5\}$ on the MatriCS platform}\label{tab:chap:3:obst_speedup_efficiency_for_n_36864_40960_four_splitting_technique_matrics}
	\begin{tabular}{|l|rr|rrrrr|rr|rrrrr|}
		\cline{2-15}
		\multicolumn{1}{c|}{}& \multicolumn{7}{c|}{Speedup}  & \multicolumn{7}{c|}{Efficiency}\\
		\cline{2-15}
		\multicolumn{1}{c|}{}  & \multicolumn{2}{c|}{\textit{frag}} & \multicolumn{5}{c|}{\textit{4s}}& \multicolumn{2}{c|}{\textit{frag}} & \multicolumn{5}{c|}{\textit{4s}}\\\hline
		\diagbox[height=0.45cm]{$p$}{$k$}   & 1    & 2    & 1 	 & 2 	& 3   & 4 	& 5 & 1    & 2    & 1 	 & 2 	& 3   & 4 	& 5 \\\hline
		\multicolumn{15}{|c|}{$n = 36864$} \\\hline
		64  & 6.35  & 7.34 & 7.06  & 9.18  & 10.19 & 10.49 & 10.76 & 9.93  & 11.47 & 11.03 & 14.34 & 15.92 & 16.39 & 16.82 \\
		96  & 7.36  & 8.71 & 8.32  & 11.15 & 12.21 & 12.52 & 12.57 & 7.66  & 9.07  & 8.67  & 11.62 & 12.72 & 13.04 & 13.09 \\
		128 & 8.35  & 9.43 & 9.36  & 12.53 & 13.86 & 14.19 & 14.26 & 6.53  & 7.37  & 7.31  & 9.79  & 10.83 & 11.09 & 11.14 \\\hline
		\multicolumn{15}{|c|}{$n = 40960$} \\\hline
		64  & 6.67  & 7.70  & 7.29  & 9.61  & 11.01 & 11.23 & 11.29 & 10.42 & 12.03 & 11.39 & 15.01 & 17.20 & 17.55 & 17.63 \\
		96  & 7.72  & 8.92  & 8.74  & 11.71 & 12.88 & 13.16 & 13.21 & 8.04  & 9.29  & 9.10  & 12.20 & 13.41 & 13.71 & 13.77 \\
		128 & 8.77  & 10.39 & 9.80  & 13.12 & 14.48 & 14.87 & 14.93 & 6.85  & 8.12  & 7.66  & 10.25 & 11.31 & 11.62 & 11.66 \\\hline
	\end{tabular}
	
		\caption{Total execution time (in seconds) and speedup for $n \in \{4096, \ldots, 16384 \}$, $p \in \{1,32\}$, and $k \in \{1, \ldots, 5\}$ while solving the OBST problem with the four-splitting technique on our Raspberry Pi cluster}\label{tab:chap:3:obst_total_exec_time_speedup_for_p_32_four_splitting_technique_compiii}
	\begin{tabular}{|l|rrrrrr|rrrrr|}
		\cline{2-12}
		\multicolumn{1}{c|}{} & \multicolumn{6}{c|}{Total execution time} & \multicolumn{5}{c|}{Speedup}\\\hline
		$n$   & $p = 1$  & $k=1$   & $k=2$   & $k=3$   & $k=4$   & $k=5$   & $k=1$ & $k=2$ & $k=3$ & $k=4$ & $k=5$ \\\hline
		4096  & 152.45   & 70.53   & 61.74   & 50.45   & 51.68   & 52.24   & 2.16  & 2.47  & 3.02  & 2.95  & 2.92  \\
		8192  & 1252.78  & 553.87  & 425.86  & 415.23  & 385.80  & 391.02  & 2.26  & 2.94  & 3.02  & 3.25  & 3.20  \\
		12288 & 4384.93  & 2670.32 & 1615.90 & 1448.08 & 1318.49 & 1351.73 & 1.64  & 2.71  & 3.03  & 3.33  & 3.24  \\
		16384 & 10660.09 & 4856.32 & 4034.42 & 3870.51 & 3610.49 & 3464.09 & 2.20  & 2.64  & 2.75  & 2.95  & 3.08  \\\hline
	\end{tabular}
\end{sidewaystable}

\subsection{Evolution of the global communication time}
Figures \ref{fig:chap:3:obst_evol_4s_com_time_p_32_compared_to_frag} and \ref{fig:chap:3:obst_evol_4s_com_time_n_40960_compared_to_frag} depict the global communication time of \emph{frag} and \emph{4s} while solving the OBST problem by performing one and two fragmentations. Recall that the global communication time is composed of the latency time of processors and the effective transfer time of data. These figures show that the global communication time of \emph{frag} and \emph{4s} is relatively lower than the one based on the regular partitioning (when $k = 0$). It gradually decreases as the number of fragmentations increases. For example, on thirty-two processors when $n = 40960$, the global communication time of \emph{frag} (respectively \emph{4s}) decreases on average by 3.87\% (respectively 41.84\%) when $k = 1$ and on average by 40.49\% (respectively 54.83\%) when $k = 2$. Indeed, when a fragmentation is performed, the number of blocks increase by dividing the current size of blocks into four to form the smaller-size blocks. It enables to minimize the latency time of processors and the effective transfer time of data since the smaller-size blocks take less time to evaluate and communicate compared to the larger-size blocks.

\begin{figure}[!t]
	\centering
	\leavevmode
	\subfloat[][$p = 32$]{%
		\label{fig:chap:3:obst_evol_4s_com_time_p_32_compared_to_frag}
		\includegraphics[width=.485\columnwidth]{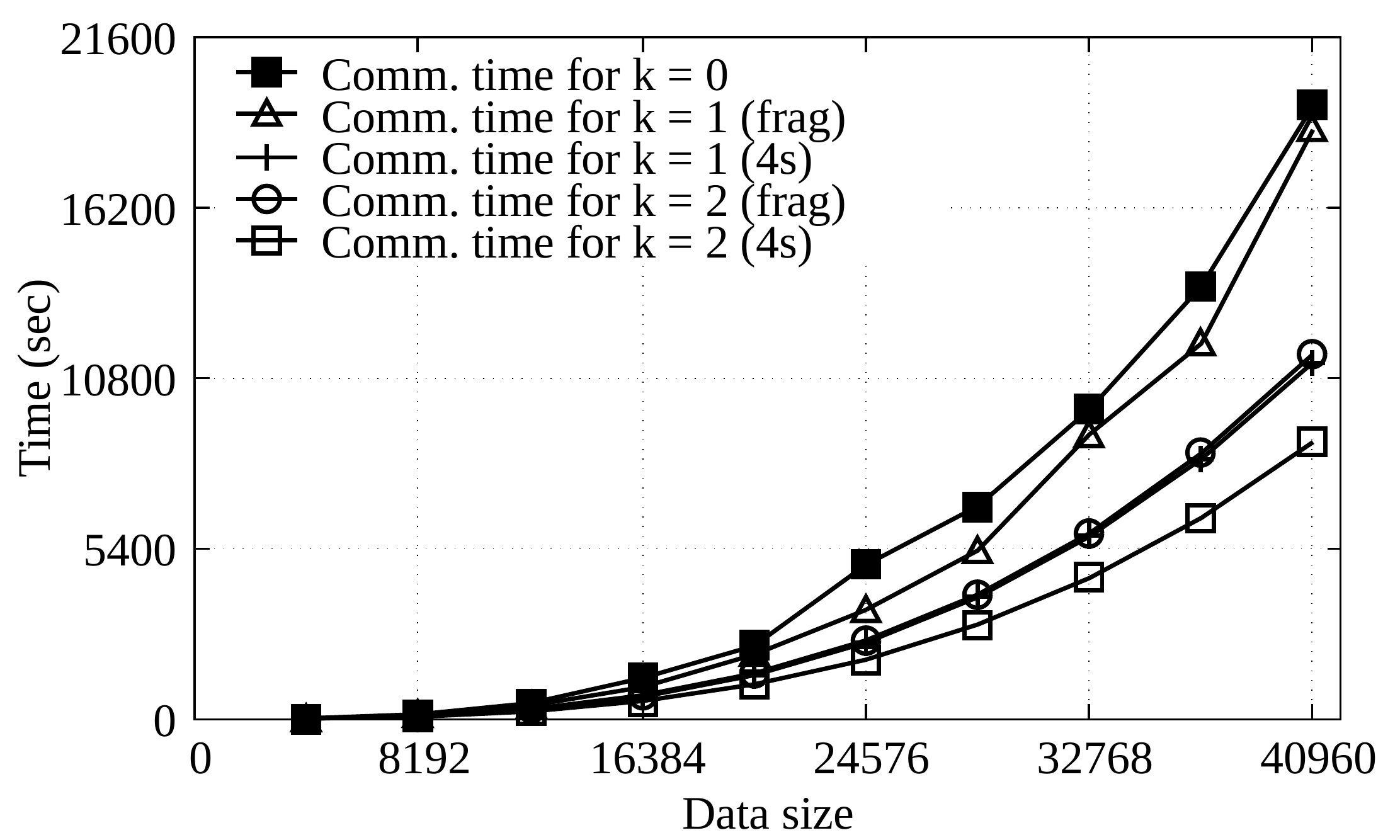}
	}
%	\\
	\subfloat[][$n = 40960$]{%
		\label{fig:chap:3:obst_evol_4s_com_time_n_40960_compared_to_frag}
		\includegraphics[width=.485\columnwidth]{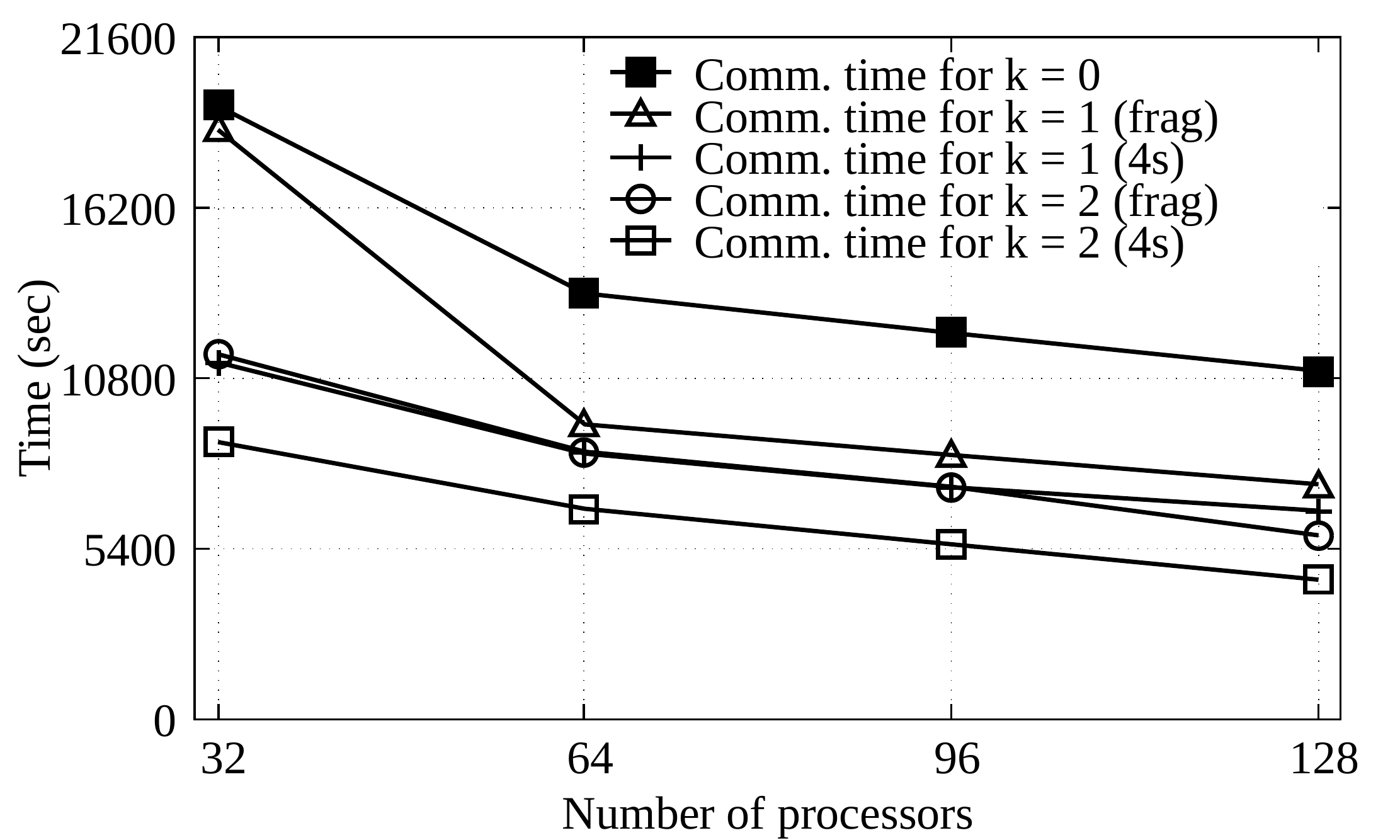}
	}
	\caption{Global communication time for $n \in \{4096, \ldots, 40960 \}$, $p \in \{32,\ldots,128\}$, and $k \in \{ 0, 1, 2\}$ on the MatriCS platform}
\end{figure}

Figures \ref{fig:chap:3:obst_evol_4s_com_time_p_32_compared_to_frag} and \ref{fig:chap:3:obst_evol_4s_com_time_n_40960_compared_to_frag} also show that the global communication time of \emph{4s} is lower than \emph{frag}. It decreases significantly when the number of fragmentations increases. These results were expected because \emph{4s} minimizes the latency time of processors by enabling them to start evaluating subblocks as soon as the data they need are available. Grouping the subblocks into two before communicating also contributes to this minimization because it limits the number of messages to be exchanged in the network.

%in general,
Figures \ref{fig:chap:3:obst_evol_4s_time_calc_com_idle_exec_p_32} and \ref{fig:chap:3:obst_evol_4s_time_calc_com_idle_exec_n_40960} show that the global communication time and the overall computation time of \emph{4s} gradually decrease while solving the OBST problem by successively performing five fragmentations. It can be noticed that until the fifth fragmentation the global communication time does not deteriorate. For example, on thirty-two processors when $n = 40960$, it decreases in average by 41.84\% when $k = 1$, in average by 54.83\% when $k = 2$, in average by 58.18\% when $k = 3$, in average by 59.43\% when $k = 4$, and in average by 69.08\% when $k = 5$. This is due to the sequential algorithm of Knuth \citep{Knuth1971} which is used to evaluate small subblocks that are close to the optimal solution. Since they do not require a high evaluation time, processors that need these subblocks will not have to wait long to receive them.

\begin{figure*}[!t]
	\centering
	\leavevmode
	\subfloat[][$p = 32$]{%
		\label{fig:chap:3:obst_evol_4s_time_calc_com_idle_exec_p_32}
		\includegraphics[width=.485\columnwidth]{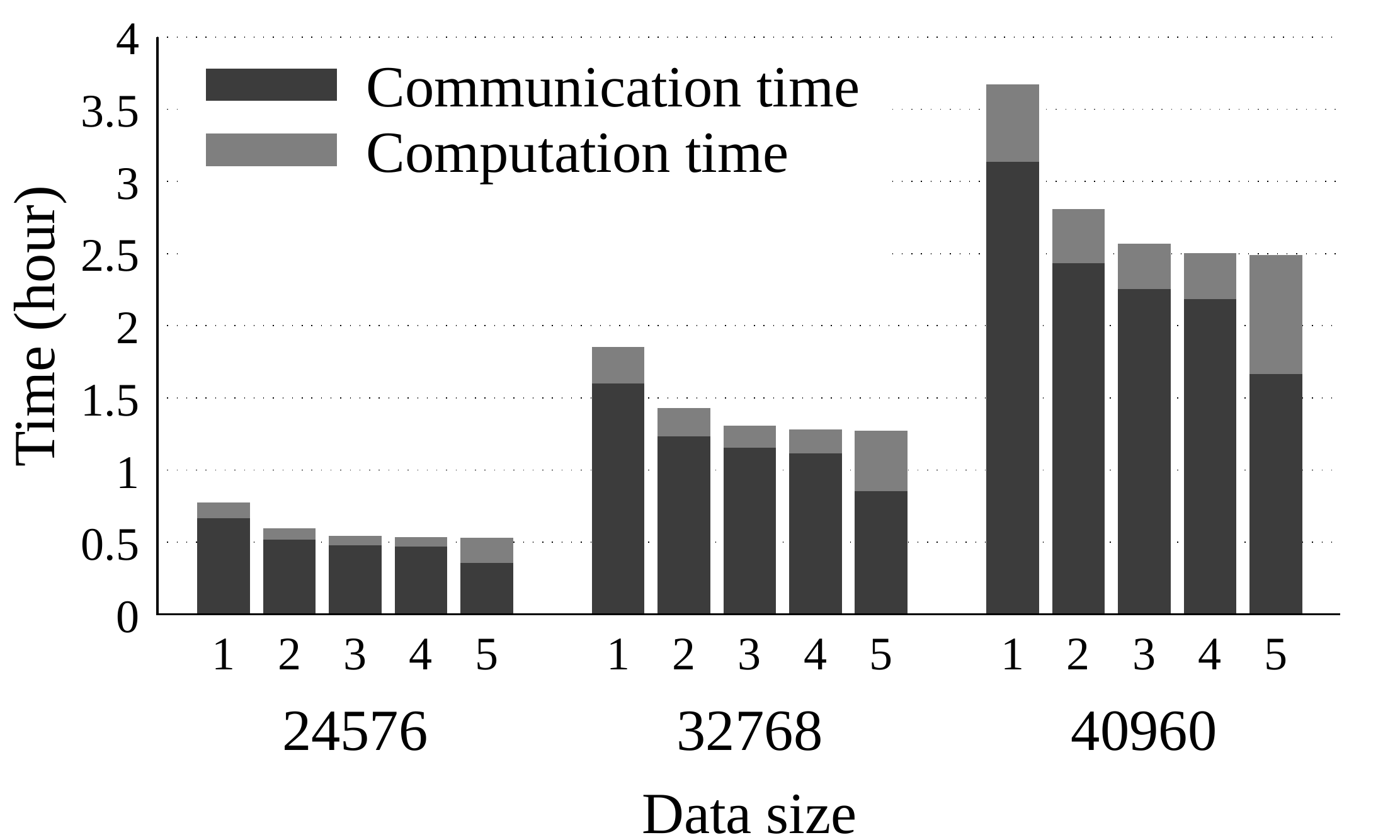}
	}
	%\\
	\subfloat[][$n = 40960$]{%
		\label{fig:chap:3:obst_evol_4s_time_calc_com_idle_exec_n_40960}
		\includegraphics[width=.485\columnwidth]{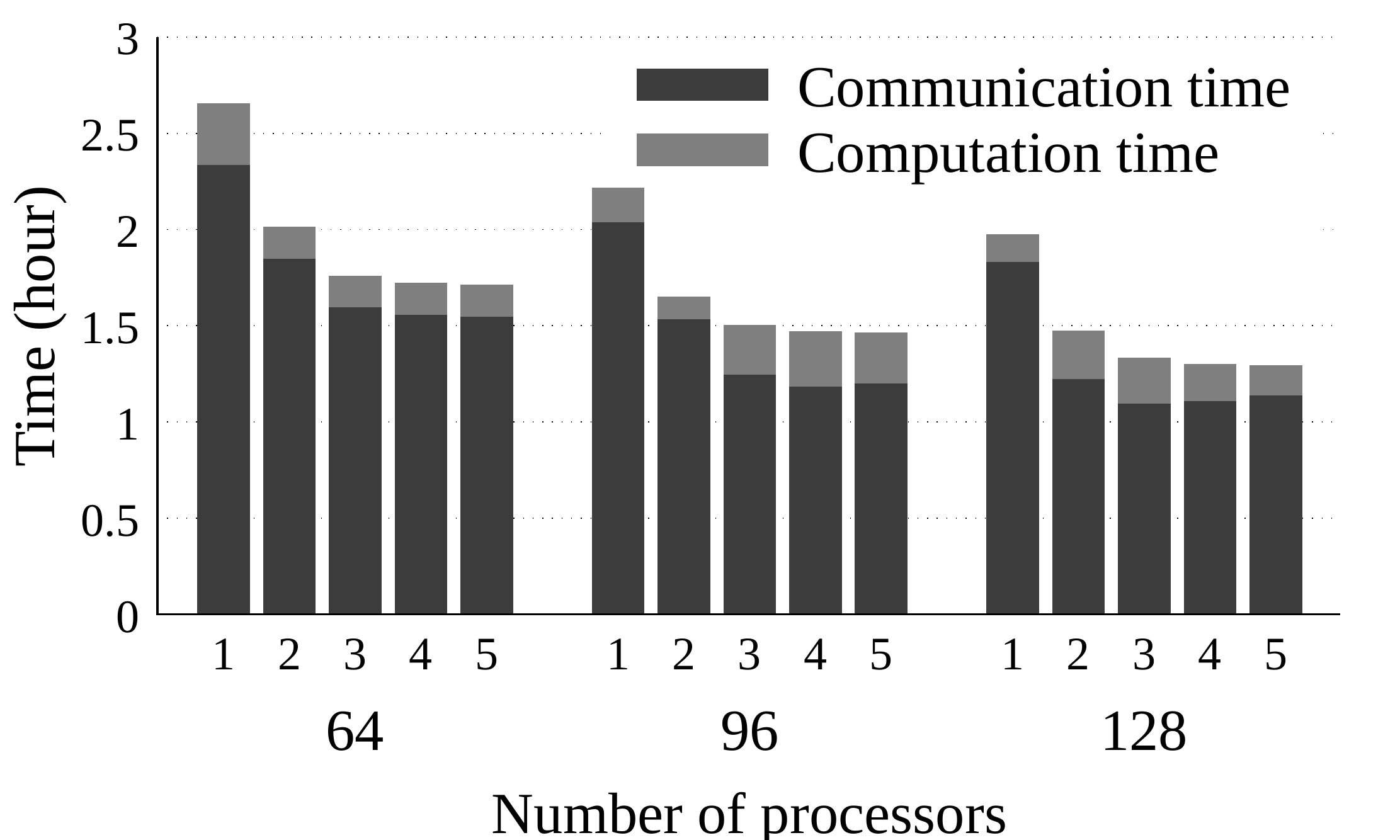}
	}
	\caption{Comparison of the overall computation time and the global communication time for $n \in \{24576, 32768, 40960 \}$, $p \in \{32,\ldots,128\}$, and $k \in \{ 1, \ldots, 5\}$ while solving the OBST problem with the four-splitting technique on the MatriCS platform}
\end{figure*}

\subsection{Evolution of the load-balancing of processors}
Figures \ref{fig:chap:3:obst_evol_4s_load_balancing_k_0_and_1} and \ref{fig:chap:3:obst_evol_4s_load_balancing_k_0_and_2} compare the load imbalance on thirty-two processors by performing the difference between the average computation time of processors and the lowest (and highest) computation time among them. The processor with the lowest or highest computational load varies with the number of fragmentations performed. For example, when $k = 0$ and $k = 2$, respectively, $P_7$ and $P_{30}$ have the lowest loads, and $P_3$ and $P_{22}$ have the highest.

These figures show that the irregular partitioning technique of the dynamic graph balances the load of processors better than the regular partitioning technique of this graph due to the gradual reduction of the block sizes that enables processors to remain active as long as possible. Indeed, the fragmentation is performed on the blocks with the largest loads. So, some small-size blocks (which are in the upper diagonals) have higher loads than the large ones (which are in the lower diagonals). Thanks to the snake-like mapping which enables  to distribute blocks onto processors in an equitable way, a processor can have in the worst case one more block than another. This greatly contributes to balancing the loads of processors. When $n = 40960$ and $k = 1$, the lowest load narrows down to 27.52\% for \emph{frag} and to 27.24\% for \emph{4s}; and the highest load increases up to 11.89\% for \emph{frag} and to 11.60\% for \emph{4s}. In the same vein when $k = 2$, the lowest load and the highest load decrease on average by 56.63\% and 42.98\% for \emph{frag}, and on average by 55.73\% and 42.77\% for \emph{4s}. Therefore, our CGM-based parallel solution based on the four-splitting technique (\emph{4s}) kept the good performance of our CGM-based parallel solution based on the irregular partitioning technique (\emph{frag}) with respect to the load-balancing of processors since they promote the load balancing when the number of fragmentations increases. These results were predictable since \emph{4s} is based on the irregular partitioning technique and the snake-like mapping.

\begin{figure*}[!t]
	\centering
	\leavevmode
	\subfloat[][$k = 1$]{%
		\label{fig:chap:3:obst_evol_4s_load_balancing_k_0_and_1}
		\includegraphics[width=.485\columnwidth]{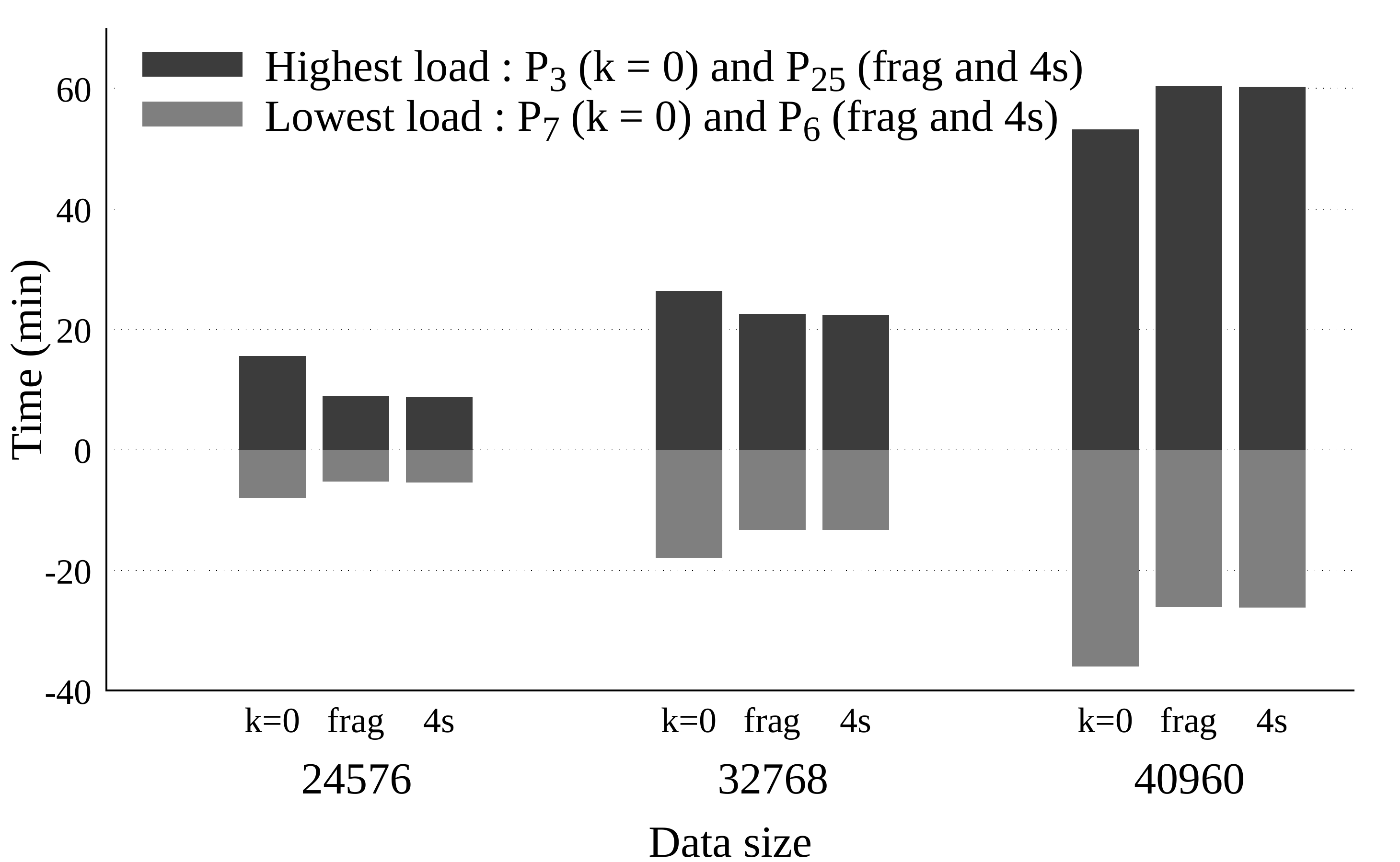}
	}
	%\\
	\subfloat[][$k = 2$]{%
		\label{fig:chap:3:obst_evol_4s_load_balancing_k_0_and_2}
		\includegraphics[width=.485\columnwidth]{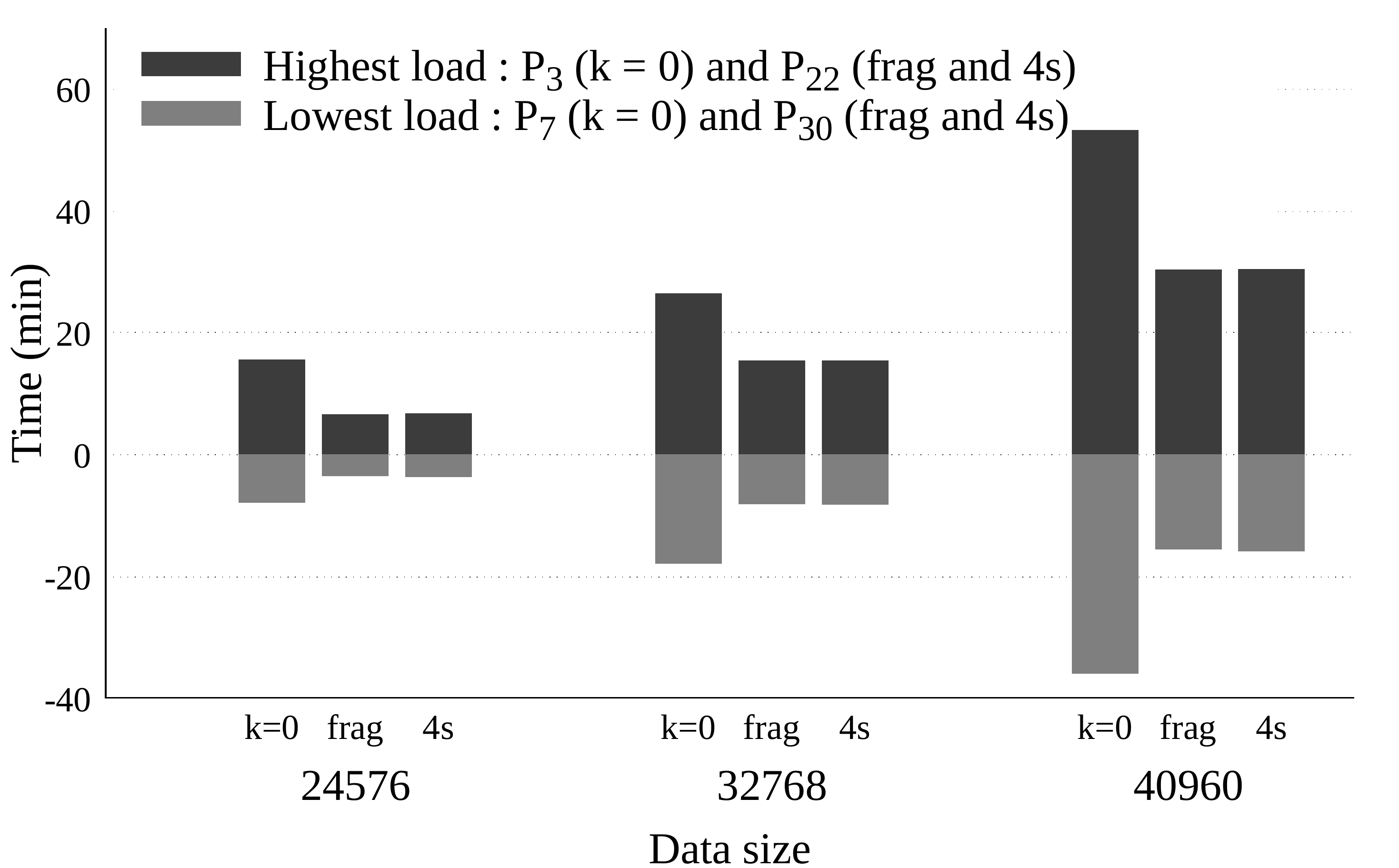}
	}
	\caption{Load imbalance of processors for $n \in \{ 24576, 32768, 40960 \}$, $p = 32$, and $k \in \{ 0, 1, 2\}$ on the MatriCS platform}
\end{figure*}

\subsection{Evolution of the total execution time}
\begin{figure*}[!t]
	\centering
	\leavevmode
	\subfloat[][$p = 32$]{%
		\label{fig:chap:3:obst_evol_4s_exec_time_p_32_compared_to_frag}
		\includegraphics[width=.485\columnwidth]{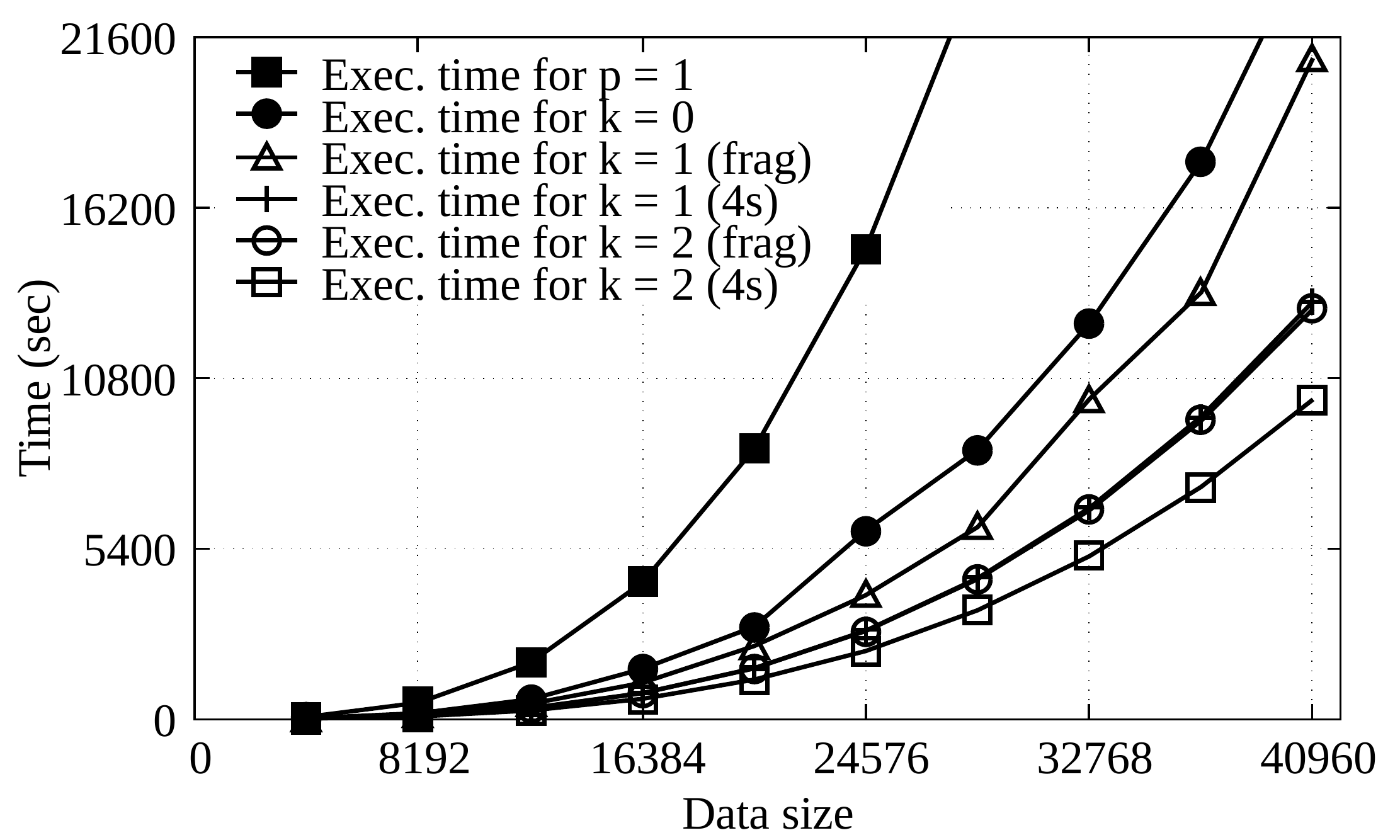}
	}
	%\\
	\subfloat[][$n = 40960$]{%
		\label{fig:chap:3:obst_evol_4s_exec_time_n_40960_compared_to_frag}
		\includegraphics[width=.485\columnwidth]{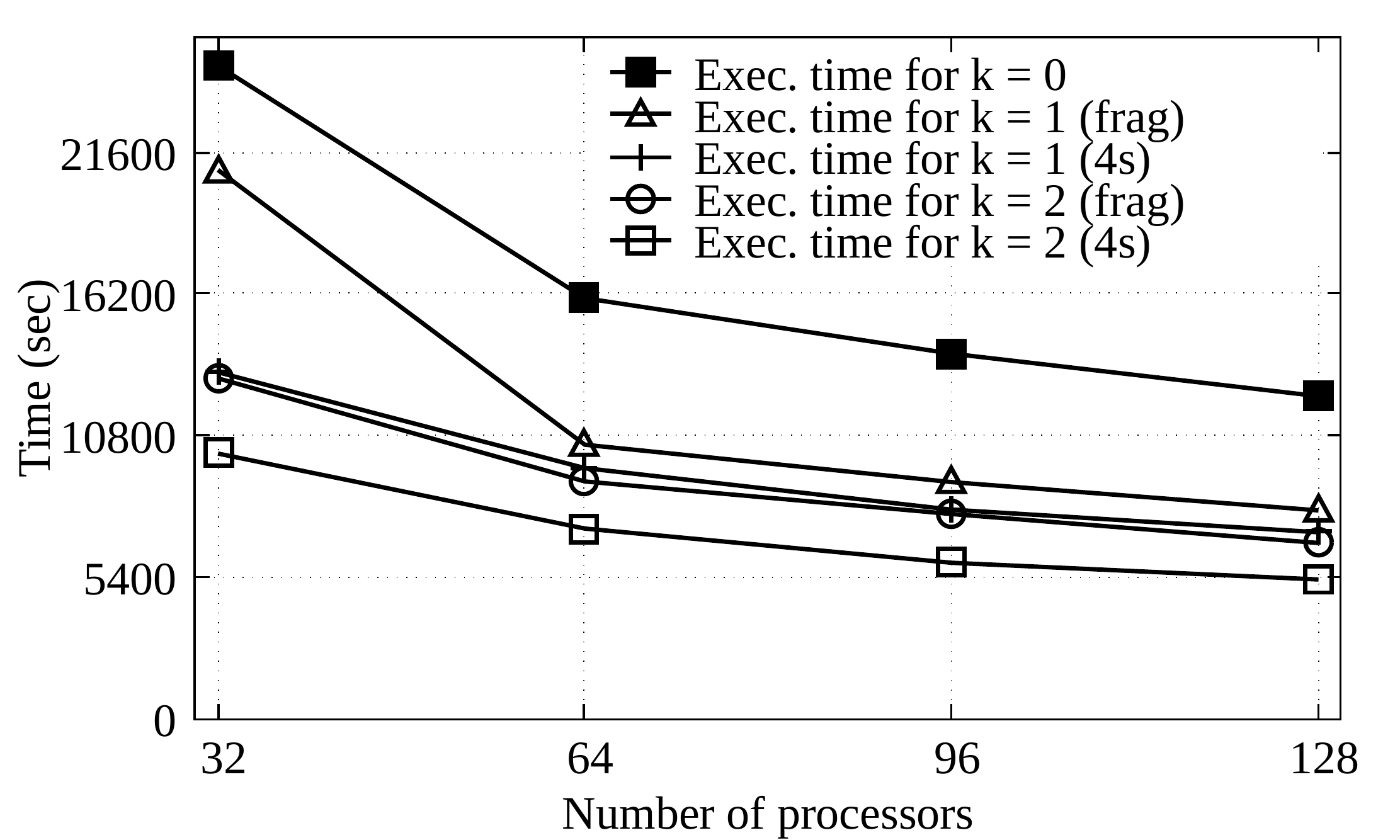}
	}
	\caption{Total execution time for $n \in \{4096, \ldots, 40960 \}$, $p \in \{32,\ldots,128\}$, and $k \in \{ 0, 1, 2\}$ on the MatriCS platform}
\end{figure*}
Figures \ref{fig:chap:3:obst_evol_4s_time_calc_com_idle_exec_p_32}, \ref{fig:chap:3:obst_evol_4s_time_calc_com_idle_exec_n_40960}, \ref{fig:chap:3:obst_evol_4s_exec_time_p_32_compared_to_frag}, and \ref{fig:chap:3:obst_evol_4s_exec_time_n_40960_compared_to_frag} illustrate the total execution time of \emph{frag} and \emph{4s} while solving the OBST problem by performing one, two, three, four, and five fragmentations. These figures show that the minimization of the global communication time (due to the minimization of the latency time of processors) lead up to a reduction of the total execution time as the number of fragmentations rises. For example, on thirty-two processors when $n = 40960$ and $k = 1$ (respectively $k = 2$) in Figure \ref{fig:chap:3:obst_evol_4s_time_calc_com_idle_exec_p_32}, the total execution time decreases on average by 15.96\% (respectively 47.87\%) for \emph{frag} and on average by 46.93\% (respectively 59.37\%) for \emph{4s}. Similar
observations can be made on one hundred and twenty-eight processors in Figure \ref{fig:chap:3:obst_evol_4s_time_calc_com_idle_exec_n_40960}; where when $n = 40960$ and $k = 1$ (respectively $k = 2$), the total execution time decreases on average by 35.40\% (respectively  45.50\%) for \emph{frag} and on average by 42.18\% (respectively 56.80\%) for \emph{4s}. It is then obvious to notice the huge impact that the global communication time has on the total execution time, and consequently on the speedup and the efficiency (as shown in Tables \ref{tab:chap:3:obst_total_exec_time_for_p_32_four_splitting_technique_matrics}, \ref{tab:chap:3:obst_speedup_efficiency_for_p_32_four_splitting_technique_matrics}, \ref{tab:chap:3:obst_total_exec_time_for_n_36864_40960_four_splitting_technique_matrics}, and \ref{tab:chap:3:obst_speedup_efficiency_for_n_36864_40960_four_splitting_technique_matrics}). The following results can be noticed : 

\begin{itemize}
	\item From $n = 4096$ to 40960 on thirty-two to one hundred and twenty-eight processors, the total execution time of \emph{4s} is better than that of \emph{frag} when performing one and two fragmentations. For example, on thirty-two processors when $n = 40960$ and $k = 1$ (respectively $k = 2$), the speedup and the efficiency are equal to 3.33 and 10.41\% (respectively 5.37 and 16.78\%) for \emph{frag}, and increase up to 5.27 and 16.48\% (respectively 6.89 and 21.53\%) for \emph{4s}.
	\item When performing three, four, and five additional fragmentations, the total execution time of \emph{4s} keeps good performance, although from the fourth fragmentation the performance gain is not meaningful anymore. For example, on thirty-two processors when $n = 40960$, the speedup and the efficiency are equal to 7.54 and 23.55\% when $k = 3$, and rise up to 7.74 and 24.17\% when $k = 4$, up to 7.78 and 24.31\% when $k = 5$.
\end{itemize}
From all this, we can deduce that it is better to use our CGM-based parallel solution based on the four-splitting technique to solve the OBST problem since it outperforms the one using the irregular partitioning technique and it is scalable as the data size, the number of processors, and the number of fragmentations rise.

\subsection{What is the ideal number of fragmentations ?}
This is a legitimate question because as shown earlier, on the MatriCS platform the total execution time of \emph{4s} degrades from the fourth fragmentation while solving the MPP but keeps good performance until the fifth fragmentation while solving the OBST problem. One idea to determine the ideal number of fragmentations would be to run our CGM-based parallel solution based on the four-splitting technique to solve the OBST problem for example on another cluster. If we obtain the same result that was archived on the MatriCS platform then this could guide us to determine this number. Otherwise, we will deduce that the ideal number of fragmentations depends on the platform where the parallel algorithm is executed.

To achieve this goal, we have built a cluster consisting of four compute nodes. Each node is a Raspberry Pi 4 computer model B composed of a 64-bit quad-core Cortex-A72 processor with 8GB of RAM and an operating system Raspbian GNU/Linux 11. All nodes are interconnected with Cat8 Ethernet cable through a TP-Link router (model TL-SG108E) providing 16Gbps throughput and the inter-processor communication has been ensured by OpenMPI version 4.1.2.

Figures \ref{fig:chap:3:obst_evol_4s_time_calc_com_idle_exec_p_32_matrics} and \ref{fig:chap:3:obst_evol_4s_time_calc_com_idle_exec_compiii} compare the overall computation time and global communication time of \emph{4s} on thirty-two processors while solving the OBST problem by performing one, two, three, four, and five fragmentations on the MatriCS platform and on our Raspberry Pi cluster. It is obvious to notice that whatever the cluster, \emph{4s} progressively reduces the global communication time when the number of fragmentations rises. However on the Raspberry Pi cluster, the performance of \emph{4s} degrades when adding the overall computation time at the fifth fragmentation for $n = 8192$ and $n = 12288$ because the global communication time did not reduce enough to keep the good performance. For example when $n = 12288$ in Table \ref{tab:chap:3:obst_total_exec_time_speedup_for_p_32_four_splitting_technique_compiii}, the speedup is equal to 1.64 when $k = 1$, and increases up to 2.71 when $k = 2$, up to 3.03 when $k = 3$, up to 3.33 when $k = 4$, but the speedup narrows down to 3.24 when $k = 5$. This result is due to the fact that the Raspberry Pi cluster does not have a communication network as fast as the MatriCS platform. Indeed, an additional fragmentation is no longer necessary from the fourth fragmentation. Thus, since the network is not very fast, excessive communication caused a drop in performance. In summary, we deduce that the ideal number of fragmentations depends on the architecture where the CGM-based parallel solution is executed.

Another relevant observation can be seen on the Raspberry Pi cluster in Figure \ref{fig:chap:3:obst_evol_4s_time_calc_com_idle_exec_compiii}, where the total execution time of \emph{4s} gradually decreases after five successive fragmentations when $n = 16384$. As shown in Table \ref{tab:chap:3:obst_total_exec_time_speedup_for_p_32_four_splitting_technique_compiii}, the speedup is equal to 2.20 when $k = 1$, and increases up to 2.64 when $k = 2$, up to 2.75 when $k = 3$, up to 2.95 when $k = 4$, and up to 3.08 when $k = 5$. This is due to the fact that when the input data size increases, fragmentations are more and more required until the performance gain is not significant or drops. However, as seen earlier, the performance of \emph{4s} degrades at the fifth fragmentation for $n = 8192$ and $n = 12288$. Consequently, we infer that the ideal number of fragmentations also depends on the input data size.

\begin{figure}[!t]
	\centering
	\leavevmode
	\subfloat[][MatriCS platform]{%
		\label{fig:chap:3:obst_evol_4s_time_calc_com_idle_exec_p_32_matrics}
		\includegraphics[width=.485\columnwidth]{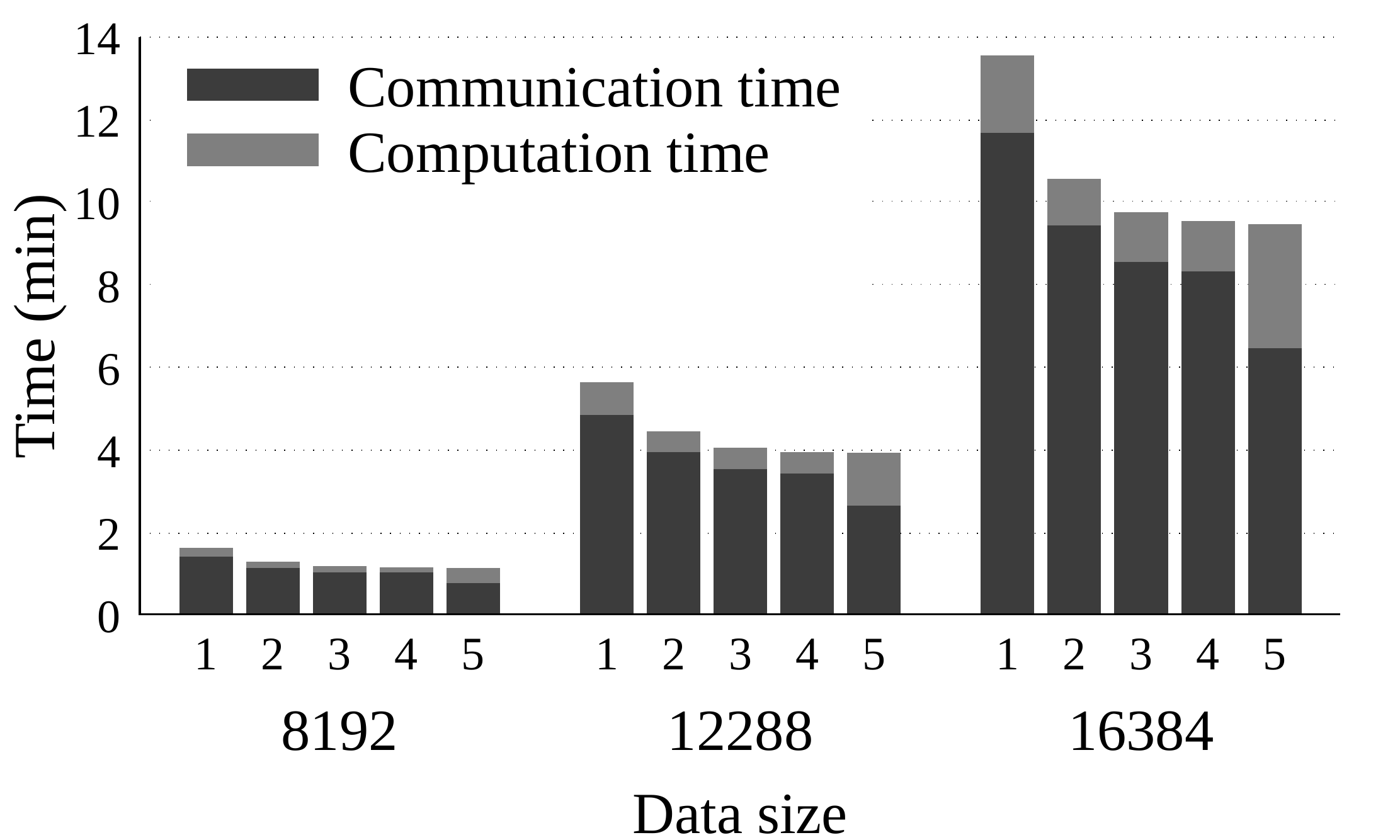}
	}
	%\\
	\subfloat[][Raspberry Pi cluster]{%
		\label{fig:chap:3:obst_evol_4s_time_calc_com_idle_exec_compiii}
		\includegraphics[width=.485\columnwidth]{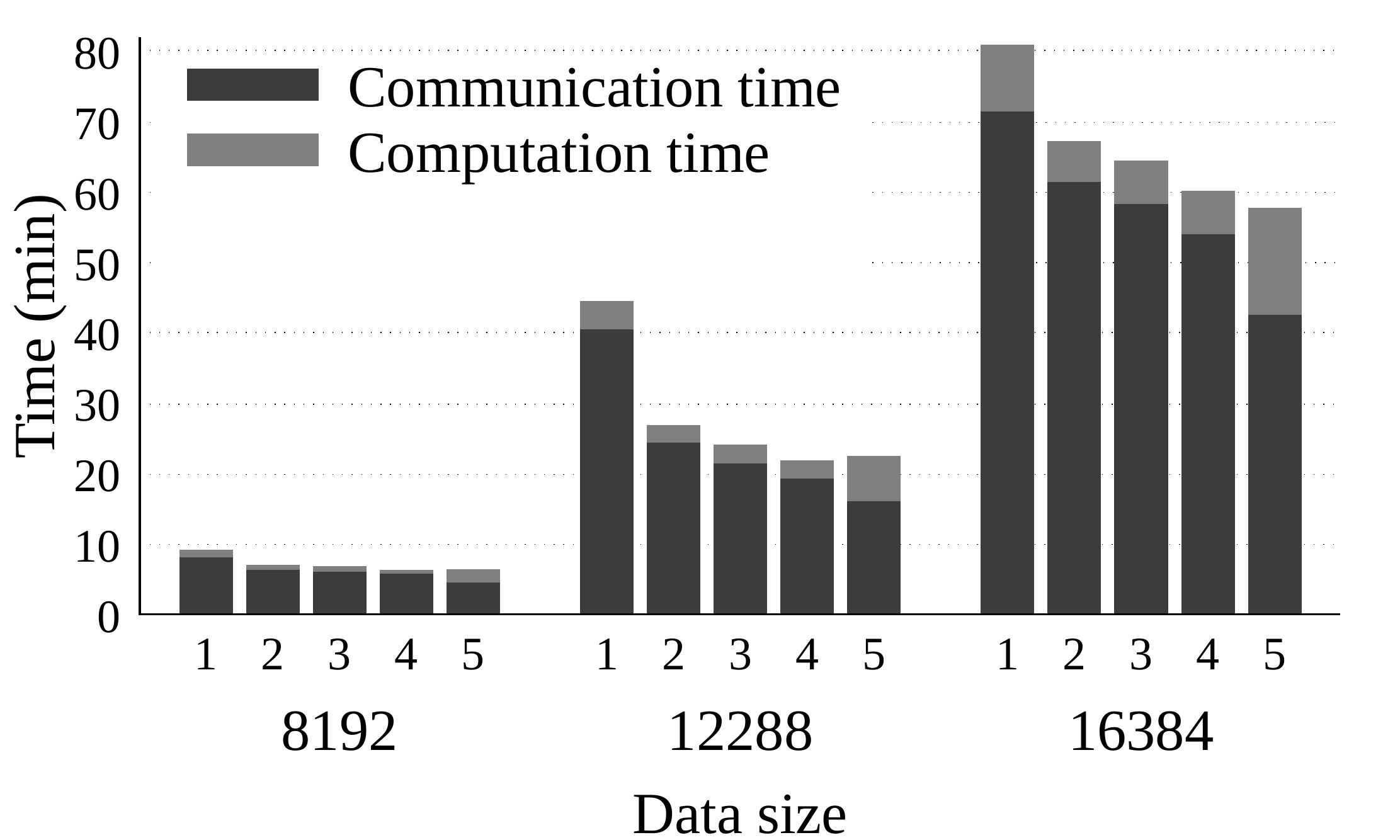}
	}
	\caption{Comparison of the overall computation time and the global communication time for $n \in \{8192, 12288, 16384 \}$, $p = 32$, and $k \in \{1, \ldots, 5\}$ while solving the OBST problem with the four-splitting technique on on the MatriCS platform and our Raspberry Pi cluster}
\end{figure}

\section{Conclusion}
We presented in this paper a CGM-based parallel solution to solve the optimal binary search tree problem using the four-splitting technique. It is a partitioning technique consisting of subdividing the dynamic graph into subgraphs (or blocks) of variable size and splitting large-size blocks into four subblocks. It enables processors to start the evaluation of blocks as soon as possible to minimize their latency time, which is the largest part of the global communication time. Our CGM-based parallel solution requires $\mathcal{O}\left(n^2/\sqrt{p} \right)$ execution time with $\mathcal{O}\left( k \sqrt{p}\right)$ communication rounds. Experimental results showed a good agreement with theoretical predictions since it significantly decreases the global communication time compared to the solution using the irregular partitioning technique \cite{Kengne2019} (and thus minimizes the total execution time). It also minimizes the number of communication rounds and balances the load of processors as well as the irregular partitioning technique. We also conducted experimentations to determine the ideal number of times the size of the blocks must be subdivided. The results showed that this number depends on the input data size and the architecture where the solution is executed. It would be interesting to undertake the challenge to determine this number before starting or during computations since it depends on the architecture where the solution is executed. It would be also interesting to apply our partitioning techniques on shared-memory architectures and GPU architectures.

\section*{Disclosure statement}
The authors declare no conflicts of interest associated with this manuscript.

\section*{Data availability statement}
The authors confirm that the data supporting the findings of this work are available within the manuscript.

%\nocite{*}

\bibliographystyle{tfnlm}
\bibliography{refs}

\end{document}